\documentclass[aps,prx,reprint,superscriptaddress,nofootinbib,longbibliography]{revtex4-2}

\usepackage[T1]{fontenc}
\usepackage{amsmath}
\usepackage{amssymb}
\usepackage{amsthm}
\usepackage{graphicx}
\usepackage{booktabs}
\usepackage{microtype}
\usepackage[colorlinks=true,citecolor=blue,linkcolor=blue,urlcolor=blue]{hyperref}
\usepackage{physics}
\usepackage[utf8]{inputenc}
\usepackage{url}
\usepackage{enumitem}

\newcommand{\id}{\operatorname{id}}

\newcommand{\cM}{\mathcal M}
\newcommand{\cN}{\mathcal N}
\newcommand{\cP}{\mathcal P}
\newcommand{\cR}{\mathcal R}
\newcommand{\piL}{\pi_{\mathrm L}}
\newcommand{\dL}{d_{\mathrm L}}

\newcommand{\appref}[1]{Appendix\,\ref{#1}}

\newcommand{\eqnref}[1]{Eq.\,\eqref{#1}}

\newtheorem{theorem}{Theorem}
\newtheorem{proposition}{Proposition}
\newtheorem{lemma}{Lemma}
\newtheorem{corollary}{Corollary}[theorem]

\begin{document}
\makeatletter
\let\originaladdcontentsline\addcontentsline
\renewcommand{\addcontentsline}[3]{%
  \def\tempa{#1}%
  \def\tempb{toc}%
  \ifx\tempa\tempb 
  \else
    \originaladdcontentsline{#1}{#2}{#3}%
  \fi
}
\makeatother

\title{Hierarchy of R\'enyi Coherent Information in Stabilizer Codes}

\author{Akash Vijay}
\affiliation{Department of Physics and Institute for Condensed Matter Theory, University of Illinois Urbana-Champaign, Urbana, Illinois 61801, USA}

\author{Luis Colmenarez}
\affiliation{Institute for Quantum Information, RWTH Aachen University, 52056 Aachen, Germany}
\affiliation{Institute for Theoretical Nanoelectronics (PGI-2), Forschungszentrum Jülich, 52428 Jülich, Germany}

\author{Jong Yeon Lee}
\email[Contact Author:$~~$]{jongyeon@illinois.edu}
\affiliation{Department of Physics and Institute for Condensed Matter Theory, University of Illinois Urbana-Champaign, Urbana, Illinois 61801, USA}
\affiliation{Korea Institute for Advanced Study, Seoul 02455, South Korea}

\date{\today}

\begin{abstract}
R\'enyi coherent information, a computable proxy for the von Neumann coherent information, is widely used to study mixed-state phases of matter and decodability transitions in noisy quantum error-correcting codes. However, being a difference of two R\'enyi entropies, it need not be monotonic in the R\'enyi index, and lacks the operational meaning of its von Neumann counterpart.
Here we address both issues for stabilizer codes. First, for Pauli noise generated by independent Bernoulli events, we prove that the R\'enyi-$n$ coherent information is nondecreasing in $n \in \mathbb{Z}^+$. This follows from a general theorem: if independent random bits are mapped linearly to a fine label $T$ and a coarse label $C$, then the R\'enyi entropy difference $H_n(C)-H_n(T)$ is nondecreasing in $n$. For stabilizer codes, $T$ is the joint syndrome--logical class and $C$ is the syndrome, and the difference is the R\'enyi-$n$ coherent information up to a constant. 
The same theorem covers classical linear codes and independent detector error models. 
Second, for arbitrary stochastic Pauli noise, we give the R\'enyi-$n$ coherent information an operational meaning via postselection on matching syndromes between one data block and $n-1$ auxiliary blocks. 
We determine when this defines a quantum channel and show that saturation of the R\'enyi-$n$ coherent information is equivalent to asymptotically perfect recovery of the postselected channel. Moreover, the R\'enyi-$n$ coherent information also upper-bounds the ordinary coherent information achievable after any syndrome-conditioned recovery.

\end{abstract}

\maketitle
 
\section{Introduction}

R\'enyi information-theoretic quantities are widely used to study decoherence-induced phase transitions in noisy quantum systems~\cite{lee2022symmetry,Lee_2023,Fan2024,LeeCI2024,NiwaLee2025,kim2024errorthresholdsykcodes,LiMong2025,Lyons2024,Chen2024PRL, Wang2025, Sohal2025, yang2025topologicalmixedstatesphases, SuYangJian2024,vijay2025informationcriticalphasesdecoherence,vijay2025holographicallyemergentgaugetheory, PhysRevResearch.6.L042014, temkin2025chargeinformedquantumerrorcorrection, wang2025fractionalquantumhallstates, d4wh-hqcp}, because integer moments of the decohered density matrix often admit mappings to tractable statistical-mechanical models~\cite{DennisKitaevLandahlPreskill2002, Wang_2003, Bombin2012, Chubb_2021}.
A recurring numerical observation in these studies is that the error threshold inferred from the R\'enyi coherent information increases with the R\'enyi index.

However, this ordering is far from automatic: the R\'enyi coherent information is a difference of two R\'enyi entropies and therefore need not be monotonic in the R\'enyi index. 
Moreover, the von Neumann coherent information obeys a quantum data-processing inequality and has a direct operational connection to quantum error correction~\cite{PhysRevA.54.2614, SchumacherNielsen1996,PhysRevA.55.1613, BarnumKnill2002,BenyOreshkov2010}, whereas the corresponding difference of R\'enyi entropies does not generally inherit these properties.
At the same time, matching syndromes across $n$ copies weights each syndrome $s$ by $p_s^{n}$~\cite{LiMong2025}; decoding under this weighting reproduces finite-$n$ R\'enyi thresholds numerically~\cite{ColmenarezMartonMuller2026}, while the infinite-R\'enyi limit is related to pure post-selection thresholds~\cite{Smith_2024,English2025}.
These observations raise two distinct questions. First, under what conditions does the numerically observed ordering become a rigorous hierarchy in the R\'enyi index? Second, when does matched-syndrome postselection define a channel on an arbitrary logical input, and what recovery guarantees follow from the Rényi coherent information?

\begin{figure}[!t]
\includegraphics[width=0.48\textwidth]{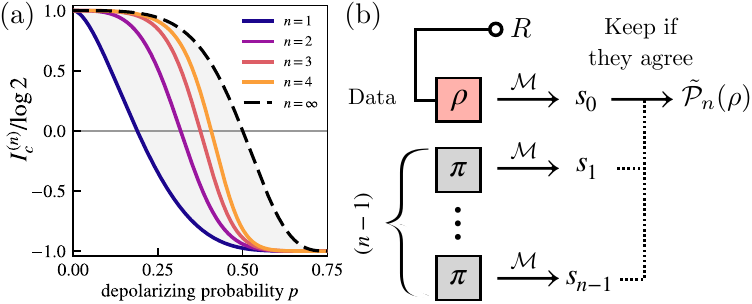}
\caption{\label{fig:overview}
{\bf (a) R\'enyi hierarchy.} Exact R\'enyi-$n$ coherent information for the $[[5,1,3]]$ code under depolarizing noise with total nonidentity probability $p$, with $q_X\,{=}\,q_Y\,{=}\,q_Z\,{=}\,p/3$. The hierarchy holds for every finite stabilizer code under independent-event Pauli noise, but can fail for more general Pauli channels. 
{\bf (b) Postselection gadget.} One arbitrary logical data state and $n-1$ maximally mixed logical ancilla states are encoded into separate blocks. Each block undergoes the same noise channel and syndrome measurement.
Retaining only matching syndrome outcomes implements the trace-nonincreasing map; when the success probability is input-independent, division by this probability yields a quantum channel.
}
\end{figure}


Here, we answer these questions.
First, we prove that the R\'enyi-$n$ coherent information $I_c^{(n)}$ of every finite qubit stabilizer code is nondecreasing with $n$ under Pauli noise generated by independent Bernoulli events; see Fig.~\ref{fig:overview}(a).
The result follows from a more general binary-linear theorem: for independent Bernoulli faults $b$ and linear labels $T(b)$ and $C(b)$, with the coarse label $C(b)$ determined by the fine label $T(b)$, the R\'enyi entropy difference $\Delta_n:=H_n(T(b))-H_n(C(b))$ is nonincreasing with $n$. For stabilizer codes, $T$ records the joint syndrome--logical Pauli class and $C$ the syndrome, giving $\Delta_n=\log\dL-I_c^{(n)}$, where $\dL$ is the logical dimension.
The theorem requires neither CSS structure nor locality and places no restriction on the logical dimension. It also applies to classical linear codes and to repeated-measurement and circuit-level detector models that admit an exact representation by independent Bernoulli faults. Each Bernoulli event may act on several qubits, so the theorem includes spatially correlated errors; without independence, the hierarchy can be violated (\appref{app:counterexamples}).

Second, building on matched-syndrome postselection~\cite{LiMong2025}, we combine one arbitrary logical input with $n-1$ maximally mixed logical auxiliary inputs, and postselect on matching syndromes. 
The normalized operation defines a quantum channel precisely when its success probability is input independent.
For arbitrary stochastic Pauli noise, this balance condition holds automatically.
At every fixed integer $n\geq2$, dimension-independent finite-size bounds make vanishing absolute R\'enyi deficit equivalent to asymptotically perfect recovery of the matched channel.
The saturated-phase boundary, when well defined, therefore coincides with the threshold for optimal decoding under power-weighted syndrome resampling~\cite{ColmenarezMartonMuller2026}.


\section{R\'enyi hierarchy}
\label{sec:binary-hierarchy}

We begin with a general result for independent binary faults and nested linear observations; the stabilizer-code hierarchy follows as a corollary.
For a finite classical random variable $X$ and $n>1$, the R\'enyi entropy is $H_n(X):=(1-n)^{-1}\log\sum_x\Pr(X=x)^n$.

Let $b=(b_1,\ldots,b_M)\in\mathbb F_2^M$ consist of independent bits $b_a\sim\operatorname{Bern}(p_a)$, and let $K_T\subseteq K_C\subseteq\mathbb F_2^M$ be nested subspaces.
The fine and coarse labels are the quotient classes
\begin{align} \label{eq:binary-labels}
T(b)&:=b+K_T\in\mathbb F_2^M/K_T, \nonumber \\
C(b)&:=b+K_C\in\mathbb F_2^M/K_C.
\end{align}
Thus $T(b)$ determines $C(b)$.
Equivalently, any linear maps with $\ker T\subseteq\ker C$ give this description by taking $K_T=\ker T$ and $K_C=\ker C$.
Define the entropy difference
\begin{align}
\Delta_n:=H_n(T(b))-H_n(C(b)).
\label{eq:binary-deficit}
\end{align}
Since $C(b)$ is determined by $T(b)$, the Shannon-order difference is $\Delta_1=H(T(b)|C(b))$, the residual uncertainty in the fine label at a fixed coarse label.

\begin{theorem}[Binary-linear R\'enyi hierarchy]
\label{thm:binary-linear-hierarchy}
For every finite binary-linear model defined above,
\begin{align}
\Delta_1\geq\Delta_2\geq\Delta_3\geq\cdots\geq\Delta_\infty.
\label{eq:binary-linear-hierarchy}
\end{align}
\end{theorem}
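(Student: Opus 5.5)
The approach is to rewrite each R\'enyi difference in terms of collision probabilities of $n$ independent replicas of $b$. I would then reduce the monotonicity of $\Delta_n$ to a log-convexity statement in $n$ and attack that statement with Fourier analysis on $\mathbb F_2^M$ together with correlation inequalities of Griffiths/GKS type.

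\textbf{Step 1: collision representation.} Let $b^{(1)},\dots,b^{(n)}$ be i.i.d.\ copies of $b$. Then
\begin{align*}
\sum_t \Pr(T=t)^n=\Pr\bigl(E^n_T\bigr), \qquad E^n_T:=\{b^{(1)}+b^{(i)}\in K_T\ \forall i\},
\end{align*}
and similarly for $C$. Since $K_T\subseteq K_C$, we have $E^n_T\subseteq E^n_C$, which gives
\begin{align*}
\Delta_n=-\frac{1}{n-1}\log r_n, \qquad r_n:=\Pr\bigl(E^n_T\,\big|\,E^n_C\bigr)\in(0,1].
\end{align*}
Set $f(n):=\log r_n$. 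Then $f\le 0$, and the formula extends to $n=1$ with $f(1)=0$, where the limit $n\to1$ recovers $\Delta_1=H(T|C)$. If $f$ is convex on the integers $n\ge1$, the slopes $(f(n)-f(1))/(n-1)$ are nondecreasing, so $\Delta_n$ is nonincreasing. The $n\to\infty$ limit then gives $\Delta_\infty$, so the whole chain follows from convexity. The case $n=1$ versus $n=2$ would be checked separately using the derivative at $n=1$.

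\textbf{Step 2: reduction to a ferromagnetic model.} Replacing $b$ by $b+e_a$ for a fixed basis vector $e_a$ translates both $T$ and $C$ bijectively. This leaves every entropy unchanged and maps $p_a\mapsto 1-p_a$, so without loss of generality all $p_a\le 1/2$. The Fourier coefficients $\hat\mu(u)=\prod_a(1-2p_a)^{u_a}$ are then nonnegative and multiplicative. Writing the indicator of $b\in K$ as $|K^\perp|^{-1}\sum_{u\in K^\perp}(-1)^{u\cdot b}$, we can express $\Pr(E_T^n)$ as
\begin{align*}
|K_T^\perp|^{1-n}\sum_{u_2,\dots,u_n\in K_T^\perp}\hat\mu(u_2+\cdots+u_n)\prod_{i\ge2}\hat\mu(u_i),
\end{align*}
which is a partition function with nonnegative weights, and likewise for $K_C$. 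Equivalently, $r_n$ is the expectation, under the replica measure conditioned on coarse agreement, of the product of parity indicators for the characters in $K_C^\perp$-complement directions, i.e.\ for $u\in K_T^\perp\setminus K_C^\perp$. Both formulations are ferromagnetic ($\mathbb Z_2$ gauge/Ising-type) systems with positive couplings $\tanh^{-1}(1-2p_a)$.

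\textbf{Step 3 (main obstacle): log-convexity $r_{n+1}r_{n-1}\ge r_n^2$.} My first attempt would be to realise $r_n$ as a correlation $\langle\prod\sigma\rangle$ in a ferromagnetic model whose coupling structure changes with $n$ by adding one replica, and then to apply the second Griffiths inequality to show that adding a replica increases the relevant correlations in a multiplicative way. Concretely, I would condition on the common coarse class $c$, so that $r_n$ becomes a $P_C(c)^n$-weighted average of $\sum_{t\mid c}P(t\mid c)^n$. I would then use the Fourier positivity to show that the conditional fine distributions inside a coarse coset are majorization-ordered in a way compatible with the tilt $P_C^n$. If a direct GKS argument fails, the fallback is a Cauchy--Schwarz/H\"older argument on the replica sum in Step 2. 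Splitting $n$ replicas as $(n-1)+(n+1)$ around a common reference copy should give $\Pr(E^n)^2\le\Pr(E^{n-1})\Pr(E^{n+1})$ for each lattice separately. The remaining and hardest issue is to control the ratio between the $T$ and $C$ lattices, which is exactly where independence and linearity must be used; Appendix~\ref{app:counterexamples} shows the claim fails without independence.
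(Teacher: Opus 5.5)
\textbf{There is a genuine gap.} Your Step~1 reduction is valid as an implication, but the inequality you set as the Step~3 target is false, so the plan cannot be completed as written. Since $\Delta_n=-f(n)/(n-1)$ and $f(1)=0$, the theorem only needs the chord slopes $f(n)/(n-1)$ from $n=1$ to increase. Convexity of $f$, i.e.\ $r_{n+1}r_{n-1}\ge r_n^2$, is strictly stronger, and it already fails for the simplest classical code the theorem covers.

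Take two independent bits with $p_1=p_2=1/10$, $T(b)=b$ and $C(b)=b_1+b_2$. Then $r_n=(9^n+1)^2/(82^n+18^n)$, which gives $r_2\approx0.9540$, $r_3\approx0.9564$ and $r_4\approx0.9502$. Hence $r_2r_4\approx0.9065<0.9147\approx r_3^2$, even though $\Delta_2\approx0.047\ge\Delta_3\approx0.022\ge\Delta_4\approx0.017$, as the theorem says.

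The mechanism is visible in your own conditioning on $c$. Here $r_n$ is a $Q_n$-average of $\sum_t P(t\mid c)^n$. Raising $n$ lowers each conditional collision, but it also shifts weight toward the likelier coarse class, which here has a nearly deterministic fine label. So $r_n$ is not even monotone in $n$.

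Consequently, no argument that ``adding a replica increases'' $r_n$, or some multiplicative version of it, can succeed, whether by GKS, majorization or H\"older. Your H\"older splitting only gives log-convexity of $\sum_tP_T(t)^n$ and $\sum_cP_C(c)^n$ separately. That is automatic and says nothing about their ratio, which is a difference of convex functions. The $\Delta_1\ge\Delta_2$ step is also left as ``check the derivative at $n=1$'' with no mechanism. The paper's $[[2,1,1]]$ example shows that this step specifically needs an input from independence.

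\textbf{The missing idea} is to write $\Delta_{m+1}$ itself, the cost per nonreference replica, as an integral of a single-replica observable. Monotonicity then becomes a first-order comparison of two ensembles rather than a second-difference inequality. The paper does this in four moves:
\begin{enumerate}
\item It keeps the reference configuration $b$ as dynamical spins $s_a$ with fields $J_a=\tfrac12\log\frac{1-p_a}{p_a}$; this is exactly where independence enters.
\item It decomposes each $h_r\in K_C$ along $K_C=K_T\oplus L$ and adds a pinning field $t$ on the hidden spins $\zeta_{r,j}$ of all $m$ replicas.
\item Replica permutation symmetry then gives $\Delta_{m+1}=\sum_j\int_0^\infty[1-\langle\zeta_{1,j}\rangle_{m,t}]\,dt$.
\item GKS, applied while the coupling $\eta$ of an added replica to the reference spins is switched on from $0$ to $1$, gives $\langle\zeta_{1,j}\rangle_{m+1,t}\ge\langle\zeta_{1,j}\rangle_{m,t}$ at every $t$.
\end{enumerate}
In this language $\log r_{m+1}=-m\sum_j\int_0^\infty[1-\langle\zeta_{1,j}\rangle_{m,t}]\,dt$. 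The explicit factor $m$ is why $r$ and $f$ misbehave even though the integrand is monotone.

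For $\Delta_1\ge\Delta_2$ the paper uses a separate random-rank sandwich. Comparing each binary symmetric channel with an erasure channel of erasure probability $\theta_a=4p_a(1-p_a)$ gives $\Delta_1\ge(\log2)\,\mathbb E_\mu\kappa(\Gamma)$. Writing $e^{-\Delta_2}=\Pr[T(w)=0\mid C(w)=0]$ with $w=b+b'$, then applying Harris's inequality and Jensen, gives $(\log2)\,\mathbb E_\mu\kappa(\Gamma)\ge\Delta_2$.
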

We now connect this theorem to stabilizer codes. For a density matrix, the R\'enyi entropy is defined analogously from its eigenvalue spectrum. Let $\rho_{RQ}$ be the state obtained by encoding one half of a maximally entangled logical state and applying the noise, with $R$ the reference and $Q$ the physical output. Its R\'enyi coherent information is
\begin{align}
I_c^{(n)}:=H_n(Q)-H_n(RQ),
\end{align}
with $n=1$ giving the von Neumann coherent information.

\begin{corollary}[Stabilizer-code hierarchy]
\label{cor:hierarchy}
For any finite qubit stabilizer code under independent-event Pauli noise, the R\'enyi coherent information obeys
\begin{align}
I_c^{(1)} \leq I_c^{(2)} \leq I_c^{(3)} \leq \cdots \leq I_c^{(\infty)}.
\label{eq:pointwise-hierarchy}
\end{align}
\end{corollary}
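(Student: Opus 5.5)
The plan is to reduce the corollary to Theorem~\ref{thm:binary-linear-hierarchy}. To do so we show that $I_c^{(n)}=\log\dL-\Delta_n$ for a suitable binary-linear model.

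\textbf{Step 1: the binary-linear model.} Model independent-event Pauli noise as follows. There are independent Bernoulli events $b_a\sim\operatorname{Bern}(p_a)$, $a=1,\ldots,M$. Each event applies a fixed Pauli $P_a$ on the $N$ physical qubits. The net error, up to phase, is $E(b)=\prod_a P_a^{b_a}$, whose symplectic vector $\varepsilon(b)=\sum_a b_a\varepsilon_a\in\mathbb F_2^{2N}$ is linear in $b$. Let $\mathcal S$ be the stabilizer group, viewed as a subspace of $\mathbb F_2^{2N}$. Define the syndrome map $C(b)$ as the vector of symplectic products of $\varepsilon(b)$ with the stabilizer generators. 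Define $T(b)$ as the class of $\varepsilon(b)$ in $\mathbb F_2^{2N}/\mathcal S$. This class records the syndrome together with the logical class, since $\mathcal S^\perp/\mathcal S$ is the logical Pauli group. Both maps are linear, and $\ker T=\{b:\varepsilon(b)\in\mathcal S\}\subseteq\ker C=\{b:\varepsilon(b)\in\mathcal S^\perp\}$. So the pair $(T,C)$ is an instance of the model \eqref{eq:binary-labels}.

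\textbf{Step 2: spectra of $\rho_{RQ}$ and $\rho_Q$.} Write $\rho_{RQ}=\sum_b \Pr(b)\,(I\otimes E(b))\,\Phi\,(I\otimes E(b))^\dagger$. Here $\Phi$ is the pure state in which the reference is maximally entangled with the code space of the $k$ logical qubits, with $\dL=2^k$. Group the errors by their class modulo $\mathcal S$; stabilizers act trivially on $\Phi$, up to phase. Errors in distinct classes $t\in\mathbb F_2^{2N}/\mathcal S$ map $\Phi$ to mutually orthogonal pure states. If two classes differ in syndrome, the states lie in orthogonal syndrome subspaces. If they have the same syndrome but differ by a nontrivial logical $L$, the overlap is proportional to $\mathrm{tr}(L)$ restricted to the code space, which vanishes. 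Hence $\rho_{RQ}$ has eigenvalues $\Pr(T=t)$, and $H_n(RQ)=H_n(T)$.

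For the output alone, $\rho_Q=\sum_b\Pr(b)\,E(b)\,\frac{\Pi}{\dL}\,E(b)^\dagger$, where $\Pi$ is the code-space projector. The term $E\,\Pi\,E^\dagger$ equals the projector onto the syndrome-$s$ subspace, which has rank $\dL$, independent of the logical part. Therefore $\rho_Q=\sum_s \Pr(C=s)\,\Pi_s/\dL$. Its spectrum is $\Pr(C=s)/\dL$, each with multiplicity $\dL$, so $H_n(Q)=H_n(C)+\log\dL$. This holds for all $n$, including the limits $n\to1$ and $n\to\infty$.

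\textbf{Step 3: conclusion.} Combining the two spectra gives $I_c^{(n)}=H_n(Q)-H_n(RQ)=\log\dL-\Delta_n$. Theorem~\ref{thm:binary-linear-hierarchy} says $\Delta_n$ is nonincreasing in $n$, so $I_c^{(n)}$ is nondecreasing, which is \eqnref{eq:pointwise-hierarchy}. The $n=1$ case follows either from continuity of Rényi entropies or directly from $\Delta_1=H(T|C)$.

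The main point requiring care is Step 2's orthogonality and spectrum claim. Phases from the product $\prod_a P_a^{b_a}$ and from stabilizer elements must be handled so that they do not affect the projectors. Distinct logical classes must give exactly orthogonal states under a maximally entangled reference, which uses the tracelessness of nontrivial logical operators on the code space. I would verify this in the Bell-basis picture: $(I\otimes \bar L)\Phi$ over the logical Paulis $\bar L$ forms an orthonormal basis of $R\otimes\mathrm{code}$.
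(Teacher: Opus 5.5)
Your proposal is correct and follows the paper's route. The paper likewise obtains $H_n(RQ)=H_n(S,L)$ from orthogonal Bell sectors and $H_n(Q)=H_n(S)+\log\dL$ from the maximally mixed syndrome sectors, then takes $T(b)=e(b)+\mathsf S$ and $C(b)=e(b)+\mathsf N$, which is equivalent to your syndrome map since its kernel is $\mathsf N=\mathsf S^{\perp}$, and applies Theorem~\ref{thm:binary-linear-hierarchy}; your explicit spectral and phase verification fills in details the paper cites from prior work.
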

Complementing a fault bit translates both labels by constants without changing their entropies.
It therefore suffices to prove the theorem for $0<p_a\leq1/2$; deterministic faults follow by continuity.
The proof has two steps: a random-rank sandwich establishes $\Delta_1\geq\Delta_2$, and a ferromagnetic pinning argument compares successive integer orders $n\geq2$. 

\begin{proof}[Step I: Shannon to collision]
For a set of fault locations $\Gamma$, let $T_\Gamma$ and $C_\Gamma$ be the restrictions to patterns supported on $\Gamma$.
Define
\begin{align}
\kappa(\Gamma)
&:=\dim T_\Gamma(\ker C_\Gamma) =\operatorname{rank}T_\Gamma-\operatorname{rank}C_\Gamma.
\label{eq:hidden-rank-main}
\end{align}
This counts the independent fine-label variations that leave the coarse label unchanged.  It is nondecreasing under inclusion of $\Gamma$.
Set $\theta_a:=4p_a(1-p_a)$, and let $\mu$ include each location $a$ in $\Gamma$ independently with probability $\theta_a$.
We prove that
\begin{align}
\Delta_1\geq(\log2)\mathbb E_\mu\kappa(\Gamma)\geq\Delta_2.
\label{eq:random-rank-sandwich}
\end{align}

For the left inequality of \eqnref{eq:random-rank-sandwich}, choose $x$ uniformly at random from $\mathbb F_2^M$, independently of $b$, and set $y=x+b$. 
Then each map $x_a \mapsto y_a$ is a binary symmetric channel with crossover probability $p_a$.
Uniform randomness of $x$ makes $y$ independent of $b$. 
At fixed $y$, the relation translates both labels by known constants, giving
\begin{align}
\Delta_1=H(T(b)|C(b))=H(T(x)|C(x),y).
\end{align}
For an arbitrary joint distribution of $U$ and a binary input $X_a$, let $Y_a$ and $Y_{\mathrm{er},a}$ be the outputs of a binary symmetric channel with crossover probability $p_a$ and a binary erasure channel with erasure probability $\theta_a$, respectively.
Since $1-\theta_a=(1-2p_a)^2$, the symmetric-channel contraction bound and the erasure identity give~\cite{Ahlswede1976,KornerMarton1977Comparison}
\begin{align}
I(U;Y_a)
&\leq(1-2p_a)^2I(U;X_a)=I(U;Y_{\mathrm{er},a}).
\end{align}
This comparison also holds for product channels with arbitrary, possibly correlated input coordinates~\cite{raginsky2016strongdataprocessinginequalities,PolyanskiyWu2017,Makur_2018}.
Let $y_{\mathrm{er}}$ be obtained from $x$ by erasing coordinate $a$ with probability $\theta_a$, independently across coordinates and independently of $x$.
Applying the product-channel comparison to the conditional distribution for $x$ given $C(x)=c$, with auxiliary variable $U=T(x)$, gives
\begin{align}
I(T(x);y_{\mathrm{er}}|C(x)=c)
\geq
I(T(x);y|C(x)=c).
\end{align}
As $H(T|C,y){=}H(T|C){-}I(T;y|C)$, averaging over $c$ gives
\begin{align}
\Delta_1
=H(T(x)|C(x),y)
\geq
H(T(x)|C(x),y_{\mathrm{er}}).
\label{eq:bec-conditional-bound}
\end{align} 
Let $\Gamma$ be the erased set, so $\Gamma\sim\mu$.
Given the erasure output $y_{\mathrm{er}}$, the values of $x_a$ for $a\notin\Gamma$ are revealed exactly, whereas the coordinates $x_a$ for $a\in\Gamma$ are unobserved. 
Fixing $C(x)$ leaves the remaining vector uniformly distributed on a translate of $\ker C_\Gamma$.
Since every attainable fine label has the same number of compatible inputs, $T(x)$ is uniform on a translate of $T_\Gamma(\ker C_\Gamma)$, which has dimension $\kappa(\Gamma)$. Therefore, averaging over $\Gamma$ gives
\begin{align}
\Delta_1 \geq (\log2)\mathbb E_\mu\kappa(\Gamma).
\label{eq:delta1-rank-main}
\end{align}

For the right inequality, let $b'$ be an independent copy of $b$ and set $w:=b+b'$. The two realizations have the same fine label when $T(w)=0$, and the same coarse label when $C(w)=0$. 
Since $e^{-H_2(X)}$ is the collision probability of $X$ and $\ker T \subseteq \ker C$, 
\begin{align}
e^{-\Delta_2}= \frac{\Pr[T(w)=0]}{\Pr[C(w)=0]}= \Pr[T(w)=0|C(w)=0].
\label{eq:delta2-collision-ratio}
\end{align}
Each $w_a$ equals one with probability $\theta_a/2$.
Equivalently, draw $\Gamma\sim\mu$, set $w_a=0$ outside $\Gamma$, and sample the coordinates inside $\Gamma$ independently and uniformly from $\{0,1\}$.
Rank-nullity gives $\Pr[C(w)=0|\Gamma]=2^{-\operatorname{rank}C_\Gamma}$ and the analogous expression for $T$.
Averaging these probabilities and using Eq.~\eqref{eq:hidden-rank-main} yields
\begin{align}
e^{-\Delta_2}
=\frac{\mathbb E_\mu\left[2^{-\operatorname{rank}C_\Gamma}2^{-\kappa(\Gamma)}\right]}
{\mathbb E_\mu 2^{-\operatorname{rank}C_\Gamma}}.
\label{eq:delta2-rank-factorized}
\end{align}
Since both factors in the numerator are nonincreasing as $\Gamma$ grows, Harris's correlation inequality~\cite{Harris1960} gives
\begin{align}
\mathbb E_\mu \Big[2^{-\operatorname{rank}C_\Gamma}2^{-\kappa(\Gamma)} \Big] \geq \mathbb E_\mu \Big[2^{-\operatorname{rank}C_\Gamma} \Big] \mathbb E_\mu \Big[ 2^{-\kappa(\Gamma)}\Big].
\end{align}
Combining this with Eq.~\eqref{eq:delta2-rank-factorized} and Jensen's inequality gives $e^{-\Delta_2}\geq\mathbb E_\mu2^{-\kappa(\Gamma)}\geq2^{-\mathbb E_\mu\kappa(\Gamma)}$.
Taking negative logarithms proves the right inequality in Eq.~\eqref{eq:random-rank-sandwich} and completes Step~I.
\end{proof}

\begin{proof}[Step II: higher integer orders]
We prove $\Delta_{m+2}\leq\Delta_{m+1}$ for every integer $m\geq1$.
Let $q(b)=\prod_aq_a(b_a)$ be the product Bernoulli law, with $q_a(1)\,{=}\,p_a$.
For a subspace $H$, let $P_H(\mathcal C):=\sum_{b\in\mathcal C}q(b)$ be its induced distribution on $\mathbb F_2^M/H$.
The $(m+1)$st moment is
\begin{align}
\mathcal M_{m+1}(H)
:=\sum_{\mathcal C\in\mathbb F_2^M/H}P_H(\mathcal C)^{m+1}.
\label{eq:direct-coset-moment-def}
\end{align}
Taking the first configuration of each ordered $(m+1)$-tuple as $b$, every tuple in one coset is uniquely represented as $(b,b+h_1,\ldots,b+h_m)$ with $h_r\in H$.
Writing $\mathbf h:=(h_1,\ldots,h_m)$ and $W_m(b,\mathbf h):=q(b)\prod_{r=1}^mq(b+h_r)$, the moment expansion gives
\begin{align}
\mathcal M_{m+1}(H)
=\sum_{b\in\mathbb F_2^M}\sum_{\mathbf h\in H^m}W_m(b,\mathbf h).
\label{eq:direct-coset-moment}
\end{align}
The fine and coarse labels correspond to $H=K_T$ and $H=K_C$, respectively, so
\begin{align}
\Delta_{m+1}
=\frac{1}{m}\log\frac{\mathcal M_{m+1}(K_C)}{\mathcal M_{m+1}(K_T)}.
\label{eq:direct-delta-moment-ratio}
\end{align}

We next express the moment as a ferromagnetic Ising partition function. Set $s_a:=(-1)^{b_a}$, $\chi_a(h):=(-1)^{h_a}$, and $J_a:=\frac12\log\frac{1-p_a}{p_a}\geq0$.
The sum over the reference configuration $b$ in Eq.~\eqref{eq:direct-coset-moment} makes the $s_a$ dynamical spins.
Since $q_a(b_a)\propto e^{J_as_a}$,
\begin{align}
 W_m(b, \mathbf h) \propto e^{ \sum_aJ_as_a\left[1+\sum_{r=1}^m\chi_a(h_r)\right] }.
\label{eq:direct-replica-ferromagnet}
\end{align}
To make the Ising structure explicit, choose a linear complement $L$ of $K_T$ in $K_C$,
\begin{align}
K_C=K_T\oplus L,\quad d_{\mathrm{hid}}:=\dim L=\dim(K_C/K_T).
\end{align}
Let $d_T:=\dim K_T$, with $\{g_i\}_{i=1}^{d_T}$ being a basis of $K_T$ and $\{\ell_j\}_{j=1}^{d_{\mathrm{hid}}}$ a basis of $L$, so their union forms a basis of $K_C$.
Every $h_r\in K_C$ has a unique decomposition
\begin{align}
h_r=\sum_{i=1}^{d_T}x_{r,i}g_i+\sum_{j=1}^{d_{\mathrm{hid}}}\lambda_{r,j}\ell_j,\qquad x_{r,i},\lambda_{r,j}\in\mathbb F_2.
\end{align}
The spins $\sigma_{r,i}:=(-1)^{x_{r,i}}$ and $\zeta_{r,j}:=(-1)^{\lambda_{r,j}}$ give
\begin{align}
\chi_a(h_r)=\prod_{i:g_{i,a}=1}\sigma_{r,i}\prod_{j:\ell_{j,a}=1}\zeta_{r,j}.
\end{align}
Every interaction in Eq.~\eqref{eq:direct-replica-ferromagnet} is therefore an Ising-spin monomial with a nonnegative coupling.

Note that $h_r\in K_T$ exactly when all $\zeta_{r,j}=+1$.
To pin these components, introduce $t\geq0$ and define
\begin{align}
\mathcal Z_{m,t}:=\sum_{b\in\mathbb F_2^M}\sum_{\mathbf h\in K_C^m}W_m(b,\mathbf h) e^{ t\sum_{r=1}^m\sum_{j=1}^{d_{\mathrm{hid}}}(\zeta_{r,j}-1) }.
\label{eq:pinned-partition-function}
\end{align}
At $t=0$, $\mathcal Z_{m,0}=\mathcal M_{m+1}(K_C)$.
As $t\to\infty$, only configurations with every $h_r\in K_T$ survive, so $\mathcal Z_{m,t}\to\mathcal M_{m+1}(K_T)$.
With $F_m(t):=-\log\mathcal Z_{m,t}$, Eq.~\eqref{eq:direct-delta-moment-ratio} becomes
\begin{align}
\Delta_{m+1}=\frac{F_m(\infty)-F_m(0)}{m},
\end{align}
the pinning free-energy cost per nonreference replica.
Writing $\langle\cdot\rangle_{m,t}$ for expectation under the normalized pinned weight, permutation symmetry among the $m$ nonreference replicas gives $\frac{d}{dt} F_m(t)=m\sum_j[1-\langle\zeta_{1,j}\rangle_{m,t}]$.
Integrating yields
\begin{align}
\Delta_{m+1}
=\sum_{j=1}^{d_{\mathrm{hid}}}\int_0^\infty
\left[1-\langle\zeta_{1,j}\rangle_{m,t}\right]\mathrm{d}t.
\label{eq:direct-pinning-integral}
\end{align}
It therefore suffices to prove $\langle\zeta_{1,j}\rangle_{m+1,t}\geq\langle\zeta_{1,j}\rangle_{m,t}$.
We use the following correlation inequality.
\begin{lemma}[GKS inequality~\cite{Griffiths,KellySherman1968,Ginibre1970}]
\label{lem:GKS}
Consider finitely many Ising spins $\sigma_i=\pm1$ with probability weight
\begin{align}
P(\sigma) \propto \exp\bigg[\sum_AJ_A\sigma_A \bigg], 
\qquad \sigma_A:=\prod_{i\in A}\sigma_i,
\end{align}
with all $J_A\geq0$. Then any two spin monomials satisfy
\begin{align}
\frac{\partial}{\partial J_B}\langle\sigma_A\rangle
=\operatorname{Cov}(\sigma_A,\sigma_B)\geq0.
\end{align}
\end{lemma}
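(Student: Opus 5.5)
We prove the GKS inequality (Lemma~\ref{lem:GKS}) by a duplicated-spin argument. The equality part is routine. Write $Z=\sum_\sigma\exp[\sum_AJ_A\sigma_A]$. Differentiating $\langle\sigma_A\rangle=Z^{-1}\sum_\sigma\sigma_A e^{\sum J\sigma}$ with respect to $J_B$ gives $\langle\sigma_A\sigma_B\rangle-\langle\sigma_A\rangle\langle\sigma_B\rangle$, which is the stated covariance. The content of the lemma is therefore Griffiths' second inequality, $\langle\sigma_A\sigma_B\rangle\geq\langle\sigma_A\rangle\langle\sigma_B\rangle$, for ferromagnetic multi-spin couplings.

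\textbf{Step 1: rewrite the covariance with two copies.} Introduce two independent copies $\sigma,\sigma'$ with the same weight. Then
\begin{align}
2Z^2\operatorname{Cov}(\sigma_A,\sigma_B)=\sum_{\sigma,\sigma'}(\sigma_A-\sigma'_A)(\sigma_B-\sigma'_B)\,e^{\sum_CJ_C(\sigma_C+\sigma'_C)}.
\end{align}
Change variables to $\sigma'_i=\sigma_i\tau_i$ with $\tau_i=\pm1$; this is a bijection on pairs of configurations. Then $\sigma_A-\sigma'_A=\sigma_A(1-\tau_A)$ and $\sigma_C+\sigma'_C=\sigma_C(1+\tau_C)$, with $\tau_A:=\prod_{i\in A}\tau_i$.

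\textbf{Step 2: fix $\tau$ and expand in $\sigma$.} At fixed $\tau$, the summand is
\begin{align}
(1-\tau_A)(1-\tau_B)\,\sigma_A\sigma_B\,\exp\Big[\sum_CJ_C(1+\tau_C)\sigma_C\Big].
\end{align}
The prefactor $(1-\tau_A)(1-\tau_B)\in\{0,4\}$ is nonnegative, and each effective coupling $J_C(1+\tau_C)$ is nonnegative. Expand the exponential as a power series. Every term is a nonnegative coefficient times a product of spin monomials $\sigma_{A_1}\sigma_{A_2}\cdots$, which collapses to a single monomial $\sigma_D$ because $\sigma_i^2=1$. Summing over $\sigma$ gives $\sum_\sigma\sigma_D=2^N\delta_{D,\emptyset}\geq0$. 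Every term is therefore nonnegative, so summing over $\tau$ keeps the total nonnegative, and hence $\operatorname{Cov}(\sigma_A,\sigma_B)\geq0$.

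The main obstacle is finding the substitution $\sigma'=\sigma\tau$, which converts the covariance into a sum of manifestly nonnegative terms. After that, the only point to check is that the positivity of the couplings survives the substitution: $1+\tau_C\geq0$ always, and the power-series expansion is legitimate because the sums are finite. Since the lemma is classical, one could instead simply cite Ginibre's general form. In the application, each spin-sign freedom $s_a,\sigma_{r,i},\zeta_{r,j}$ is a genuine $\pm1$ spin, and each interaction in Eq.~\eqref{eq:direct-replica-ferromagnet} is a monomial with coupling $J_a\geq0$. The pinning term $t\zeta_{r,j}$ is also a monomial with nonnegative coupling, so the lemma applies directly.
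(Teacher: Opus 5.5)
Your proof is correct and complete: the derivative identity, the two-copy representation of $2Z^2\operatorname{Cov}(\sigma_A,\sigma_B)$, the substitution $\sigma'_i=\sigma_i\tau_i$, and the power-series positivity of $\sum_\sigma\sigma_A\sigma_B\exp[\sum_C J_C(1+\tau_C)\sigma_C]$ all check out. The paper does not prove this lemma and only cites Griffiths, Kelly--Sherman, and Ginibre; your duplicated-spin argument is Ginibre's proof, so it follows the cited route.
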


The key observation is that $\mathcal Z_{m+1,t}$ can be reduced to $\mathcal Z_{m,t}$, up to a decoupled multiplicative factor, by continuously turning off the coupling of the last variable $h_{m+1}$ to the reference spins.
Introduce $\eta \in[0,1]$ and replace the exponent in Eq.~\eqref{eq:direct-replica-ferromagnet} by
\begin{align}
\sum_aJ_as_a\left[1+\sum_{r=1}^m\chi_a(h_r)+ \eta \chi_a(h_{m+1})\right],
\label{eq:add-replica-interpolation}
\end{align}
while retaining the same pinning field $t$ on all $m+1$ variables.
Let $\langle\cdot\rangle_{m+1,t,\eta}$ denote expectation in this interpolating ensemble.
At $\eta=1$, we recover $\mathcal Z_{m+1,t}$.
At $\eta=0$, the last variable $h_{m+1}$ factorizes, so normalized expectations involving the first $m$ variables coincide with those in $\mathcal Z_{m,t}$.
In particular,
\begin{align}
\langle\zeta_{1,j}\rangle_{m+1,t,0}&=\langle\zeta_{1,j}\rangle_{m,t}, \\
\langle\zeta_{1,j}\rangle_{m+1,t,1}&=\langle\zeta_{1,j}\rangle_{m+1,t}.
\end{align}
All couplings in \eqnref{eq:add-replica-interpolation} remain nonnegative, including the pinning fields $t\zeta_{r,j}$.
Both $\zeta_{1,j}$ and $s_a\chi_a(h_{m+1})$ are spin monomials, so Lemma~\ref{lem:GKS} gives
\begin{align}
\frac{\partial}{\partial \eta}\langle\zeta_{1,j}\rangle_{m+1,t,\eta}
&=\sum_aJ_a\operatorname{Cov}_{m+1,t,\eta}\left(\zeta_{1,j},s_a\chi_a(h_{m+1})\right) \nonumber \\
& \geq 0.
\label{eq:direct-GKS-derivative}
\end{align}
Comparing endpoints yields $\langle\zeta_{1,j}\rangle_{m+1,t}\geq\langle\zeta_{1,j}\rangle_{m,t}$. Substitution into Eq.~\eqref{eq:direct-pinning-integral} proves $\Delta_{m+2}\leq\Delta_{m+1}$ for every $m\geq1$.
Together with Step~I, this establishes all finite integer orders.
Since the system is finite, taking $n\to\infty$ gives the infinite-order endpoint and completes the proof of Theorem~\ref{thm:binary-linear-hierarchy}.
\end{proof}

 \textit{Coherent information in stabilizer codes.--}
For an $N$-qubit stabilizer code encoding $k$ qubits, write $\dL=2^k$.
Under stochastic Pauli noise, let $S$ and $L$ denote the syndrome and logical Pauli class, using a fixed choice of syndrome representatives.
Orthogonal Bell sectors give $H_n(RQ)=H_n(S,L)$~\cite{LeeCI2024,NiwaLee2025}, while the maximally mixed physical marginal within each syndrome sector gives $H_n(Q)=H_n(S)+\log\dL$.
Consequently,
\begin{align}
I_c^{(n)}&=\log\dL-\Delta_n,\quad \Delta_n=H_n(S,L)-H_n(S).
\label{eq:stabilizer-binary-identification}
\end{align}
Both the joint state and its physical marginal are already block diagonal in the syndrome sectors.
Explicitly measuring and retaining the syndrome therefore leaves these entropies unchanged.

Now assume independent Bernoulli Pauli noise events $b$.
Represent Paulis modulo phase by $\mathsf P_N=\mathbb F_2^{2N}$ with its symplectic pairing~\cite{681315}.
Let $\mathsf S\subseteq\mathsf P_N$ be the stabilizer subspace and $\mathsf N:=\mathsf S^{\perp_{\mathrm{sp}}}$ its Pauli normalizer modulo phase.
For event vectors $e_a$, the realized Pauli is $e(b):=\sum_ab_ae_a$.
The labels
\begin{align}
T(b)&:=e(b)+\mathsf S, \qquad C(b):=e(b)+\mathsf N
\end{align}
are the joint syndrome--logical class and the syndrome, respectively.
Since $\mathsf S\subseteq\mathsf N$, their kernels are nested, and Theorem~\ref{thm:binary-linear-hierarchy} proves Corollary~\ref{cor:hierarchy}.
Here $d_{\mathrm{hid}}=\dim(\ker C/\ker T)\leq2k$, with equality exactly when zero-syndrome combinations of the allowed events span all logical Pauli classes. The independent-event assumption cannot be dropped in general: \appref{app:counterexamples} gives stochastic Pauli channels for which the hierarchy fails.
 
\textit{Detector error models and classical codes.--}
For an exact independent detector error model, let $D(b)$ be the detector pattern and $\Lambda(b)$ the logical-observable flips, both linear in the independent Bernoulli faults~\cite{Gidney_2021}.
Taking $C(b):=D(b)$ and $T(b):=(D(b),\Lambda(b))$ gives $\Delta_n=H_n(D,\Lambda)-H_n(D)$, which obeys the same hierarchy.
This includes repeated-measurement and circuit-level models whenever their fault mechanisms admit an exact independent Bernoulli representation.
Independent Pauli faults propagated through Clifford circuits and independent measurement or reset flips give such binary-linear responses.
Standard depolarizing Pauli noise is included up to complete depolarization.
For a classical binary linear code under independent bit flips, take $T(b)=b$ and $C(b)=Hb$, where $H$ is the parity-check matrix.

Both proof steps use the binary structure.
Although the coset-moment identity extends to finite abelian groups, the higher-order argument uses real $\pm1$ characters and nonnegative Ising couplings; a nonbinary extension therefore requires a different correlation argument.

\section{Matched-syndrome operation}
\label{sec:matched-syndrome}

We now relate the R\'enyi deficit to recovery after matched-syndrome postselection. We allow arbitrary stochastic Pauli noise, without the independent-event assumption of Sec.~\ref{sec:binary-hierarchy}. Let $V:\mathcal H_{\mathrm L}\to\mathcal H_Q$ encode a stabilizer code of logical dimension $\dL$, and let $\Pi_s$ project onto its $\dL$-dimensional syndrome sector $s$. The unnormalized branches of the noise $\cN$ are
\begin{align}
\cM_s(\rho_{\mathrm L}):=\Pi_s\cN(V\rho_{\mathrm L}V^\dagger)\Pi_s.
\label{eq:branch-map}
\end{align}
For Pauli noise, $p_s:=\Tr\cM_s(\rho_{\mathrm L})$ is independent of $\rho_{\mathrm L}$. Write $q_s(\ell):=\Pr(L=\ell\mid S=s)$ for the logical Pauli posterior, omitting branches with $p_s=0$.
For a normalized maximally entangled input $\Phi_{R\mathrm L}$, the conditional Choi states are
\begin{align}
J_s:=\frac{(\id_R\otimes\cM_s)(\Phi_{R\mathrm L})}{p_s}
=\sum_\ell q_s(\ell)\ketbra{\Phi_{s,\ell}}.
\label{eq:syndrome-resolved-choi}
\end{align}
Here $\ket{\Phi_{s,\ell}}$ is the logical Bell state labeled by $\ell$, embedded in syndrome sector $s$ using a fixed Pauli representative for $s$.
Recording $s$ in a classical register $S$ gives $\Omega_{RSQ}:=\sum_sp_s\ketbra{s}_S\otimes J_s$, whose R\'enyi coherent information and deficit are $I_c^{(n)}$ and $\Delta_n$ of Eq.~\eqref{eq:stabilizer-binary-identification}.

\emph{Matched-syndrome operation.--}
For $n\geq2$, prepare one block in an arbitrary logical state $\rho_{\mathrm L}$ and $n-1$ auxiliaries in $\piL:=I_{\mathrm L}/\dL$.
Encode each block, apply the noise and syndrome measurement independently, and retain the data block and common syndrome only when all outcomes agree; see Fig.~\ref{fig:overview}(b).
The auxiliaries all report $s$ with probability $p_s^{n-1}$, giving the unnormalized operation
\begin{align}
\cP_n(\rho_{\mathrm L}):=\sum_sp_s^{n-1}\ketbra{s}_S\otimes\cM_s(\rho_{\mathrm L}).
\label{eq:postselection-operation}
\end{align}
We call the matched-syndrome operation \emph{$n$-balanced} if its acceptance probability is independent of the logical input~\cite{nbalance}.
For any stabilizer code under stochastic Pauli noise, each syndrome probability $p_s$ is input independent, so $n$-balance holds automatically for every integer $n\geq2$.
The acceptance probability is $Z_n:=\sum_sp_s^n=e^{-(n-1)H_n(S)}$, and hence $\widetilde{\cP}_n:=\cP_n/Z_n$ is a quantum channel with Choi state
\begin{align}
\Theta_n:=(\id_R\otimes\widetilde{\cP}_n)(\Phi_{R\mathrm L})
=\sum_sQ_n(s)\ketbra{s}_S\otimes J_s,
\label{eq:flagged-escort-choi}
\end{align}
where $Q_n(s):=p_s^n/Z_n$ is the normalized power-weighted distribution. Matching therefore changes the syndrome weights but not the conditional states $J_s$.
When $H_n(S)$ is extensive, $Z_n$ is exponentially small in system size, making the implementation exponentially costly.

\begin{proposition}[Matched-syndrome recovery under Pauli noise]
\label{prop:pauli-recovery}
Let $\epsilon_n:=1-\sum_sQ_n(s)\max_\ell q_s(\ell)$ be the maximum-likelihood error probability under $Q_n$.
For every syndrome-conditioned recovery channel $\cR:SQ\to\mathrm L$,
\begin{align}
I(R\rangle\mathrm L)_{(\id_R\otimes\cR)(\Theta_n)}
\leq I_{c,\mathrm{match}}\leq I_c^{(n)},
\label{eq:matched-ci-data-processing}
\end{align}
where $I_{c,\mathrm{match}}:=I(R\rangle SQ)_{\Theta_n}$ is the ordinary coherent information of $\widetilde{\cP}_n$ at input $\piL$.
The optimal recovered entanglement fidelity is $F_n=1-\epsilon_n$, and
\begin{align}
F_n^n\leq e^{-(n-1)\Delta_n}\leq F_n.
\label{eq:simpler_bound}
\end{align}
\end{proposition}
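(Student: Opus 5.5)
The plan is to reduce everything to the block structure of $\Theta_n$ in Eq.~\eqref{eq:flagged-escort-choi}, which is classical on $S$ and Bell-diagonal within each sector. The deficit then becomes a single expectation under $Q_n$. Dividing the collision-type sum defining $H_n(S,L)$ by $Z_n$ gives
\begin{align}
e^{-(n-1)\Delta_n}=\frac{\sum_s p_s^n\sum_\ell q_s(\ell)^n}{\sum_s p_s^n}
=\mathbb E_{Q_n}\Big[\sum_\ell q_s(\ell)^n\Big].
\end{align}
All the inequalities will be derived from this identity.

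\emph{Step 1: data processing.} The first inequality in Eq.~\eqref{eq:matched-ci-data-processing} is the standard data-processing inequality for coherent information. Here $\cR$ acts only on $SQ$, and $I(R\rangle\cdot)$ cannot increase under a channel on the non-reference system.

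\emph{Step 2: $I_{c,\mathrm{match}}\le I_c^{(n)}$.} Because $\Theta_n$ is block diagonal in $s$, I will compute both entropies directly.
\begin{itemize}
\item Each $J_s$ has $Q$-marginal maximally mixed on the $\dL$-dimensional sector, so $H(SQ)=H(Q_n)+\log\dL$.
\item Each $J_s$ has spectrum $q_s$, so $H(RSQ)=H(Q_n)+\mathbb E_{Q_n}H(q_s)$.
\end{itemize}
Hence $I_{c,\mathrm{match}}=\log\dL-\mathbb E_{Q_n}H(q_s)$. By Eq.~\eqref{eq:stabilizer-binary-identification}, it then suffices to show $\Delta_n\le\mathbb E_{Q_n}H(q_s)$. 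Applying Jensen's inequality (concavity of $\log$) to the identity above gives
\begin{align}
\Delta_n\le\mathbb E_{Q_n}\Big[-\tfrac{1}{n-1}\log\sum_\ell q_s(\ell)^n\Big]=\mathbb E_{Q_n}H_n(q_s).
\end{align}
Monotonicity of R\'enyi entropies of a single distribution then gives $H_n(q_s)\le H(q_s)$, which completes this step.

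\emph{Step 3: $F_n=1-\epsilon_n$.}
\begin{itemize}
\item \emph{Achievability.} Conditioned on $s$, apply the fixed syndrome representative to return to the code space, then apply the logical Pauli correcting the most likely class $\arg\max_\ell q_s(\ell)$. The fidelity in sector $s$ is then $\max_\ell q_s(\ell)$, and averaging over $Q_n$ gives $1-\epsilon_n$.
\item \emph{Converse.} This is the step I expect to be the main obstacle: I must show that no recovery $\cR_s$ achieves $\langle\Phi|\cR_s(J_s)|\Phi\rangle>\max_\ell q_s(\ell)$. My plan is to twirl. Conjugating $\cR_s$ by logical Paulis, with the representative of $s$ on the input side and the corresponding Pauli on the output, permutes the $\Phi_{s,\ell}$ covariantly. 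Averaging over the twirl, combined with convexity, should reduce the problem to Pauli-covariant recoveries, i.e.\ mixtures of Pauli corrections. For those, the fidelity is $\sum_k r_k q_s(k)\le\max_\ell q_s(\ell)$. The delicate point is checking that the twirl does not lower the $q_s$-weighted fidelity. If it does, I will fall back on an operational argument: a Bell measurement after $\cR_s$ defines a guessing strategy for $\ell$ whose success probability is at least the fidelity, and that success probability is at most $\max_\ell q_s(\ell)$.
\end{itemize}

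\emph{Step 4: the sandwich~\eqref{eq:simpler_bound}.} Write $M_s:=\max_\ell q_s(\ell)$.
\begin{itemize}
\item \emph{Upper bound.} Since $\sum_\ell q_s(\ell)^n\le M_s^{\,n-1}\sum_\ell q_s(\ell)=M_s^{\,n-1}\le M_s$, averaging under $Q_n$ gives $e^{-(n-1)\Delta_n}\le F_n$.
\item \emph{Lower bound.} Since $\sum_\ell q_s(\ell)^n\ge M_s^{\,n}$, Jensen's inequality for the convex map $x\mapsto x^n$ gives $\mathbb E_{Q_n}M_s^{\,n}\ge(\mathbb E_{Q_n}M_s)^n=F_n^n$.
\end{itemize}
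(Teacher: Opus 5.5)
Your Steps~1, 2 and 4 match the paper's proof. It uses the same identity $e^{-(n-1)\Delta_n}=\sum_sQ_n(s)\sum_\ell q_s(\ell)^n$, Jensen plus $H_n\le H$ for $I_{c,\mathrm{match}}\le I_c^{(n)}$, and data processing for the left inequality. It also uses the same pointwise bounds $r_s^n\le\sum_\ell q_s(\ell)^n\le r_s$ with Jensen for the sandwich.

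The only real divergence is the converse $F_n\le1-\epsilon_n$. The paper dualizes: $X_s:=(\id_R\otimes\cR_s^\dagger)(\Phi_{R\mathrm L})$ is positive with unit trace, because $\cR_s$ is trace preserving and the sector has dimension $\dL$. Hence the recovered fidelity $\Tr(J_sX_s)$ is at most $\lambda_{\max}(J_s)=r_s$. This takes one line, uses nothing about $J_s$ beyond its top eigenvalue, and is reused for general noise in the appendix.

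Your twirl also works, and the point you flagged as delicate is not an obstacle. After identifying the sector with the logical space, both $J_s$ and $\Phi_{R\mathrm L}$ are invariant under $\bar P_k\otimes P_k$. The reference factor $\bar P_k$ commutes through $\id_R\otimes\cR_s$, so each conjugated recovery $P_k\cR_s(P_k\cdot P_k)P_k$ has exactly the same fidelity as $\cR_s$. By linearity (not merely convexity), the twirled channel $\sum_m r_mP_m\cdot P_m$ then has fidelity $\sum_\ell r_\ell q_s(\ell)\le r_s$. What this buys is the extra structural fact that an optimal recovery can be taken to be a Pauli mixture; the cost is more machinery than the paper's one-line bound.

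Your fallback argument, however, is not sound as stated. A party holding both $R$ and the recovered output can Bell-measure and identify $\ell$ perfectly when $\cR_s$ is just the sector identification, so the guessing success is not bounded by $r_s$. The correct elementary version sets $f_\ell:=\langle\Phi|(\id_R\otimes\cR_s)(\Phi_{s,\ell})|\Phi\rangle$ and shows $\sum_\ell f_\ell=1$. This follows from $\sum_\ell\ketbra{\Phi_{s,\ell}}=I_R\otimes I_s$ together with trace preservation, and then $\sum_\ell q_s(\ell)f_\ell\le r_s$. That is exactly the paper's density-operator argument in different clothing.
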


\begin{proof}
Equation~\eqref{eq:stabilizer-binary-identification} gives
\begin{align}
B_n:=e^{-(n-1)\Delta_n} =\sum_sQ_n(s)\sum_\ell q_s(\ell)^n.
\label{eq:matched-posterior-moment}
\end{align}
Jensen's inequality and $H_n\,{\leq}\,H$ imply $\Delta_n\leq\sum_sQ_n(s)H_n(q_s)\leq\sum_sQ_n(s)H(q_s)$.
Since $\Theta_n$ has Bell-diagonal blocks, $I_{c,\mathrm{match}}=\log\dL-\sum_sQ_n(s)H(q_s)$, proving the right inequality in Eq.~\eqref{eq:matched-ci-data-processing}.
The left inequality is ordinary coherent-information data processing~\cite{SchumacherNielsen1996}.

Set $r_s:=\max_\ell q_s(\ell)$.
For any recovery channel $\cR_s$ on sector $s$, the operator $X_s:=(\id_R\otimes\cR_s^\dagger)(\Phi_{R\mathrm L})$ is positive and has unit trace, since recovery is trace preserving and the sector has dimension $\dL$.
Hence the recovered entanglement fidelity $\Tr(J_sX_s)$ is at most $r_s$, attained by correcting a most likely logical Pauli.
Optimizing separately for each syndrome gives $F_n=\sum_sQ_n(s)r_s=1-\epsilon_n$.
For $n\geq2$, the pointwise bounds $r_s^n\leq\sum_\ell q_s(\ell)^n\leq r_s$ give Eq.~\eqref{eq:simpler_bound} after averaging over $Q_n$ and using Jensen's inequality $\sum_sQ_n(s)r_s^n\geq F_n^n$.
\end{proof}

Therefore, Eq.~\eqref{eq:matched-ci-data-processing} implies that $I_c^{(n)}\leq0$ rules out positive ordinary coherent information after any syndrome-conditioned recovery of $\widetilde{\cP}_n$, while Eq.~\eqref{eq:simpler_bound} gives $F_n\leq\dL^{-(n-1)/n}$.
Positive $I_c^{(n)}$ can nevertheless coexist with $I_{c,\mathrm{match}}<0$~\cite{trivial-qubit-check}. 
For $n\geq2$, Eq.~\eqref{eq:simpler_bound} makes vanishing absolute deficit $\Delta_n\to0$ equivalent to $\epsilon_n\to0$, even for code families with growing $\dL$.
Saturation therefore characterizes asymptotically perfect recovery of the full logical block, and the saturated-phase boundary is the decoding threshold under $Q_n$.

The implication from saturation to recovery extends to general quantum noise: Proposition~\ref{prop:recovery} of Appendix~\ref{app:recovery-bounds} establishes $F_n\geq e^{-2(n-1)\Delta_n}$ for the syndrome-resolved state, preserving the dimension-independent implication from saturation to recovery at fixed $n$.
It also proves the converse at fixed $\dL$.
Under $n$-balance, these state-recovery bounds apply to the matched channel (Appendix~\ref{app:balance}).


\section{Conclusion and Outlook}

We established a finite-size binary-linear hierarchy for independent Bernoulli faults: the R\'enyi entropy difference between a fine linear label and its linear coarse-graining is nonincreasing with the integer order.
This yields the coherent-information hierarchy for stabilizer codes and also applies to classical linear codes and exact independent detector error models.
For one-parameter noise families in this class with a single positive-to-negative zero crossing at each order, the crossing strengths $p_c^{(n)}$ satisfy
\begin{align}
p_c^{(1)}\leq p_c^{(2)}\leq p_c^{(3)}\leq\cdots\leq p_c^{(\infty)}.
\end{align}

Matched-syndrome postselection gives a complementary operational interpretation for arbitrary stochastic Pauli noise, where $n$-balance holds automatically.
Given $n$, absolute saturation $\Delta_n\to0$ is equivalent to asymptotically perfect maximum-likelihood decoding under $Q_n(s)\propto p_s^n$. The saturated-phase boundary, when well defined, is therefore the recovery threshold of the matched channel.
The zero crossing gives a distinct, one-way certificate: at the maximally mixed logical input, $I_c^{(n)}\leq0$ rules out positive ordinary coherent information after any syndrome-conditioned recovery of this channel.

The independent-event assumption is sufficient but may not be optimal.
The counterexample in \appref{app:counterexamples} shows that product Pauli noise can violate the hierarchy even for a distance-two CSS code.
This example is inhomogeneous, leaving open whether homogeneous product noise $\nu^{\otimes N}$ can produce such a violation.
Characterizing the broadest noise class supporting the hierarchy and determining whether monotonicity extends to noninteger orders are natural future directions. 
A common formulation of the random-rank and ferromagnetic pinning arguments may clarify equality cases and the role of Fourier positivity, and suggest extensions beyond binary-linear systems, including to qudits and suitable coherent-noise models. 
More broadly, for general quantum channels,  integer-R\'enyi coherent information may remain experimentally accessible through randomized measurements even when its von Neumann counterpart is not. This raises the possibility of reconstructing the latter---and its associated threshold---by analytic continuation~\cite{Vijay_SAC}.

Beyond threshold ordering, it would be valuable to understand how syndrome reweighting changes the critical behavior of recovery transitions.
The matched-syndrome construction also motivates studying the tradeoff between acceptance probability and conditional recovery fidelity.
In particular, can less restrictive conditioning on syndrome outcomes retain quantitative recovery guarantees at lower postselection cost~\cite{English2025}?
Understanding these questions could clarify which features of replica phase diagrams reflect recoverability under the original noise and which depend on the conditioned syndrome ensemble.


\section*{acknowledgments}

A. V. and J. Y. L. are supported by the faculty startup grant at the University of Illinois, Urbana-Champaign, and the IBM-Illinois Discovery Accelerator Institute. 
L. C. is supported by the U.S. ARO Grant No. W911NF-21-1-0007, IARPA and the Army Research Office under the Entangled Logical Qubits program, and Cooperative Agreement Number W911NF-23-2-0216. 
The views and conclusions contained in this document are those of the authors and should not be interpreted as representing the official policies, either expressed or implied, of IARPA, the Army Research Office, or the U.S. Government. 
The U.S. Government is authorized to reproduce and distribute reprints for Government purposes notwithstanding any copyright notation herein. 
J. Y. L. is supported by the Quantum Universe Center scholar program at KIAS.

\clearpage 
\newpage

\makeatletter
\let\addcontentsline\originaladdcontentsline
\makeatother

\appendix

\makeatletter
\@removefromreset{equation}{section}
\@removefromreset{figure}{section}
\@removefromreset{table}{section}
\makeatother
\onecolumngrid

\makeatletter
\def\l@subsection#1#2{}
\def\l@subsubsection#1#2{}
\makeatother

\setcounter{equation}{0}
\setcounter{figure}{0}
\setcounter{table}{0}
\setcounter{theorem}{0}
\setcounter{proposition}{0}
\setcounter{lemma}{0}
\setcounter{corollary}{0}

\makeatletter
\renewcommand{\theequation}{S\arabic{equation}}
\renewcommand{\thefigure}{S\arabic{figure}}
\renewcommand{\thetable}{S\arabic{table}}
\renewcommand{\thetheorem}{S\arabic{theorem}}
\renewcommand{\theproposition}{S\arabic{proposition}}
\renewcommand{\thelemma}{S\arabic{lemma}}
\renewcommand{\thecorollary}{S\arabic{corollary}}
\setcounter{subsection}{0}
\makeatother

\begin{center}
{\large\bfseries Supplemental Material for}\\[1ex]
{\large\bfseries ``Hierarchy of R\'enyi Coherent Information in Stabilizer Codes''}\\[2ex]
\end{center}

\medskip

The Supplemental Material collects counterexamples, properties of the matched-syndrome operation, recovery bounds, and limitations of the recovery interpretation. We use the notation of the main text throughout; Choi states and maximally entangled states are normalized, and all logarithms are natural.

\tableofcontents 

\section{Counterexamples beyond the independent-event class}
\label{app:counterexamples}

We construct two product Pauli channels for which the hierarchy fails.
The first violates $\Delta_1\geq\Delta_2$; the second satisfies this inequality but violates $\Delta_2\geq\Delta_3$.
A single-qubit Fourier criterion shows that both channels lie outside the independent-event class.
Throughout this appendix, single-qubit Pauli probability vectors are ordered as $(I,X,Y,Z)$.

\subsection{A Fourier criterion for the independent-event class}

Let $\nu$ be a probability distribution on the single-qubit Pauli group modulo phase, $\mathsf P_1=\{I,X,Y,Z\}$.
Its symplectic Fourier coefficients are
\begin{align}
\widehat\nu(u):=\sum_{E\in\mathsf P_1}\nu(E)(-1)^{[u,E]}, \qquad u \in \mathsf{P}_1
\end{align}
where $[u,E]\in\mathbb F_2$ is zero when $u$ and $E$ commute and one when they anticommute.
If the realized Pauli is the product of independent events $e_a$ occurring with probabilities $p_a$, independence gives
\begin{align}
\widehat\nu(u)
&=\prod_a\left[(1-p_a)+p_a(-1)^{[u,e_a]}\right] = \prod_{a:[u,e_a]=1}(1-2p_a).
\label{eq:fourier-product}
\end{align}
Grouping events by Pauli type, define $\alpha:=\prod_{a:e_a=X}(1-2p_a)$, $\beta:=\prod_{a:e_a=Y}(1-2p_a)$, and $\gamma:=\prod_{a:e_a=Z}(1-2p_a)$.
Then $\alpha,\beta,\gamma\in[-1,1]$ and
\begin{align}
\bigl(\widehat\nu(X),\widehat\nu(Y),\widehat\nu(Z)\bigr)
=(\beta\gamma,\alpha\gamma,\alpha\beta).
\label{eq:single-qubit-event-factorization}
\end{align}
Conversely, every such triple is realized by independent $X$, $Y$, and $Z$ events with probabilities $(1-\alpha)/2$, $(1-\beta)/2$, and $(1-\gamma)/2$, respectively.
Thus Eq.~\eqref{eq:single-qubit-event-factorization}, with $\alpha,\beta,\gamma\in[-1,1]$, characterizes the single-qubit independent-event class exactly.
In particular, every distribution in this class satisfies
\begin{align}
\widehat\nu(X)\widehat\nu(Y)\widehat\nu(Z)
=(\alpha\beta\gamma)^2\geq0.
\label{eq:fourier-criterion}
\end{align}
The sign condition holds for arbitrary event probabilities.
More generally, complementing every event with $p_a>1/2$ writes $\nu$ as a deterministic Pauli translate of an independent-event distribution whose event probabilities are all at most $1/2$.
By Eq.~\eqref{eq:fourier-product}, the latter distribution has nonnegative Fourier coefficients.
However, even positivity of all Fourier coefficients does not guarantee an independent-event representation: the valid Fourier triple $(0.805,0.9,0.9)$ would require $\alpha^2=0.81/0.805>1$ in Eq.~\eqref{eq:single-qubit-event-factorization}.

For depolarizing noise with total error probability $p$, all three nontrivial Fourier coefficients equal $1-4p/3$.
For $p\leq3/4$, Eq.~\eqref{eq:single-qubit-event-factorization} is realized by $\alpha=\beta=\gamma=\sqrt{1-4p/3}$, corresponding to three independent flips with equal probability
\begin{align}
r=\frac{1-\sqrt{1-4p/3}}{2}.
\end{align}
For $p>3/4$, all three coefficients are negative and violate Eq.~\eqref{eq:fourier-criterion}.
Thus depolarizing noise belongs to the independent-event class exactly up to complete depolarization.

The exclusion test also applies to many-qubit channels through their single-qubit marginals.
Restricting each Pauli event to one qubit, while retaining its independent Bernoulli variable, produces an independent-event representation of that marginal.
Consequently, a negative Fourier product in any single-qubit marginal excludes an independent-event representation of the full channel, even one using multi-qubit events.

\subsection{Failure of the von Neumann--collision step}

Consider the $[[2,1,1]]$ code with stabilizer $X_1X_2$ and logical operators $\bar X=X_1$ and $\bar Z=Z_1Z_2$.
Let qubit~2 be noiseless and let qubit~1 undergo
\begin{align}
\nu_A=\frac1{16}(1,3,6,6).
\end{align}
The syndrome is trivial for $I_1$ and $X_1$ and nontrivial for $Y_1$ and $Z_1$.
Taking $Z_1$ as the reference error for the nontrivial syndrome, $Y_1\propto Z_1X_1$ has logical class $\bar X$.
The joint syndrome--logical distribution is therefore
\begin{align}
P(s,\ell)=\frac1{16}\begin{pmatrix}1&3&0&0\\6&6&0&0\end{pmatrix},
\label{eq:s-endpoint-counterexample-law}
\end{align}
with rows $s\in\{0,1\}$ and columns $\ell\in\{I,\bar X,\bar Y,\bar Z\}$.
Writing $h_{\mathrm{bin}}(x):=-x\log x-(1-x)\log(1-x)$ gives
\begin{align}
\Delta_1
&=\frac14h_{\mathrm{bin}}\left(\frac14\right)+\frac34\log2
\approx0.660444,\nonumber\\
\Delta_2
&=\log\frac{80}{41}\approx0.668455.
\end{align}
Hence $\Delta_1<\Delta_2$, or equivalently $I_c^{(1)}>I_c^{(2)}$.
Numerical evaluation also gives $\Delta_1<\Delta_2<\cdots<\Delta_{100}$ (verified using 100-th digits).
The Fourier coefficients of $\nu_A$ are $(-1/2,-1/8,-1/8)$ in the order $(X,Y,Z)$.
Their negative product excludes an independent-event representation by Eq.~\eqref{eq:fourier-criterion}.

The reversal is driven by syndrome reweighting.
With $p_s:=\sum_\ell P(s,\ell)$, $q_s(\ell):=P(s,\ell)/p_s$, and $Q_2(s):=p_s^2/\sum_t p_t^2$, the collision deficit satisfies
\begin{align}
e^{-\Delta_2}=\sum_sQ_2(s)\sum_\ell q_s(\ell)^2.
\end{align}
Keeping the original syndrome weights would give $-\log\sum_s p_s\sum_\ell q_s(\ell)^2\leq\Delta_1$ by Jensen's inequality and $H_2\leq H$.
Here, however, syndrome $s=1$ has probability $p_1=3/4$ but reweighted probability $Q_2(1)=9/10$.
Its logical posterior is uniform on two classes, whereas that of $s=0$ has probabilities $(1/4,3/4)$ on its support.
The shift toward the more uncertain syndrome produces the reversal.
The random-rank argument of Step~I excludes this net reversal for independent Bernoulli faults; it does not require more probable syndromes to have smaller conditional uncertainty.

\subsection{Failure of a higher-order step}

The next example detects every nonidentity single-qubit Pauli error, showing that a violation does not require distance one.
Consider the $[[4,1,2]]$ CSS code with stabilizer generators $X_1X_2X_3X_4$, $Z_1Z_2Z_3Z_4$, and $X_1X_2$, and logical operators $\bar X=X_1X_3$ and $\bar Z=Z_1Z_2$.
Let the noise be $\nu_1\otimes\nu_2\otimes\nu_3\otimes\nu_4$, where
\begin{align}
\nu_1&=\Bigl(\frac{49}{50},\frac1{50},0,0\Bigr), \qquad 
\nu_2=\Bigl(\frac{49}{50},0,\frac1{50},0\Bigr), \qquad 
\nu_3=\Bigl(\frac12,0,0,\frac12\Bigr), \qquad 
\nu_4=\Bigl(\frac25,\frac3{10},0,\frac3{10}\Bigr).
\end{align}
The first three factors are single Bernoulli $X$, $Y$, and $Z$ events, respectively.
The fourth has Fourier coefficients $(2/5,-1/5,2/5)$, whose product is negative.
Thus qubit~4 is the only factor outside the independent-event class, and the marginal argument excludes an independent-event representation of the full channel.
Summing error probabilities within each stabilizer coset gives
\begin{align}
e^{-\Delta_2}&=\frac{20417}{35858}, \qquad
e^{-2\Delta_3}=\frac{138827}{450740}.
\end{align}
The deficits at the first three orders are
\begin{align}
\Delta_1&\approx0.569448, \qquad 
\Delta_2\approx0.563199, \qquad
\Delta_3\approx0.588831.
\end{align}
Thus the von Neumann--collision inequality holds, whereas $\Delta_2<\Delta_3$ gives $I_c^{(2)}>I_c^{(3)}$.
By continuity, the violation persists under sufficiently small changes of $\nu_4$ that remain valid probability distributions, with the other factors fixed.

\vspace{5pt} \noindent {\bf Remarks.} Both counterexamples are inhomogeneous, with only one single-qubit factor outside the independent-event class.
For the $[[4,1,2]]$ code, homogeneous noise $\nu_A^{\otimes4}$ or $\nu_4^{\otimes4}$ satisfies $\Delta_1\geq\Delta_2\geq\cdots\geq\Delta_{100}$ in our numerical checks.
Whether homogeneous product Pauli noise can violate the hierarchy for some stabilizer code remains open.


\section{Matched-syndrome operation for general noise}
\label{app:balance}

\subsection{Syndrome instrument and syndrome-resolved states}
\label{app:general-instrument}

Let $V:\mathcal H_{\mathrm L}\to\mathcal H_Q$ be the encoding isometry, $\{\Pi_s\}$ the syndrome projectors, and $\cN$ an arbitrary quantum channel.
Each syndrome sector $\mathcal H_s:=\Pi_s\mathcal H_Q$ has dimension $\dL$; let $I_s$ denote its identity.
The branch maps $\cM_s(\rho_{\mathrm L}):=\Pi_s\cN(V\rho_{\mathrm L}V^\dagger)\Pi_s$ form a quantum instrument: each is completely positive, and their sum is trace preserving.
Unlike the Pauli case, the syndrome probabilities can depend on the logical input.
We define $p_s:=\Tr\cM_s(\piL)$ using the maximally mixed input $\piL:=I_{\mathrm L}/\dL$.
Since $\piL$ has full rank, $p_s=0$ implies $\cM_s=0$, so such branches can be omitted throughout.

Let $\Phi_{R\mathrm L}$ denote the density operator of a normalized maximally entangled state between the logical input and a reference $R$.
The normalized branch states and their output marginals are
\begin{align}
J_s&:=\frac{(\id_R\otimes\cM_s)(\Phi_{R\mathrm L})}{p_s},\nonumber\\
\omega_s&:=\Tr_RJ_s=\frac{\cM_s(\piL)}{p_s}.
\label{eq:s-branch-state}
\end{align}
Retaining the syndrome in a classical register gives $\Omega_{RSQ}:=\sum_sp_s\ketbra{s}_S\otimes J_s$ and $\Omega_{SQ}:=\Tr_R\Omega_{RSQ}$.
For general noise, we define $I_c^{(n)}:=H_n(\Omega_{SQ})-H_n(\Omega_{RSQ})$ and $\Delta_n:=\log\dL-I_c^{(n)}$ from these syndrome-resolved states.
These quantities need not agree with those before syndrome measurement, which can remove coherences between sectors.

For an integer $n\geq2$, matching the data syndrome to those of $n-1$ independent auxiliary inputs in $\piL$ gives the same operation as Eq.~\eqref{eq:postselection-operation}:
\begin{align}
\cP_n(\rho_{\mathrm L})
=\sum_sp_s^{n-1}\ketbra{s}_S\otimes\cM_s(\rho_{\mathrm L}).
\end{align}
Its acceptance probability at input $\piL$ is $Z_n:=\Tr\cP_n(\piL)=\sum_sp_s^n$.
Writing $Q_n(s):=p_s^n/Z_n$, the normalized postselected state obtained from $\Phi_{R\mathrm L}$ is
\begin{align}
\Theta_n&:=\frac{(\id_R\otimes\cP_n)(\Phi_{R\mathrm L})}{Z_n} =\sum_sQ_n(s)\ketbra{s}_S\otimes J_s.
\end{align}
This state is well defined without assuming input-independent success.
The classical syndrome flag gives $\Tr\Omega_{RSQ}^n=\sum_sp_s^n\Tr J_s^n$ and $\Tr\Omega_{SQ}^n=\sum_sp_s^n\Tr\omega_s^n$, hence
\begin{align}
e^{-(n-1)\Delta_n}
=\frac{\sum_sQ_n(s)\Tr J_s^n}
{\sum_sQ_n(s)\dL^{n-1}\Tr\omega_s^n}.
\label{eq:s-moment-factorization}
\end{align}

Define the optimal branch recovery fidelity and its $Q_n$-average by
\begin{align}
F(J_s)&:=\max_{\cR_s}\Tr\left[\Phi_{R\mathrm L}(\id_R\otimes\cR_s)(J_s)\right],\nonumber\\
F_n&:=\sum_sQ_n(s)F(J_s),
\label{eq:s-Fn}
\end{align}
where $\cR_s$ ranges over channels from $\mathcal H_s$ to $\mathcal H_{\mathrm L}$.
Because the syndrome is retained, the recoveries can be optimized separately for each branch.
Thus $F_n$ is the optimal recovery fidelity of $\Theta_n$; under $n$-balance, it is also the optimal recovered entanglement fidelity of the matched channel at input $\piL$.
For stochastic Pauli noise, $J_s=\sum_\ell q_s(\ell)\ketbra{\Phi_{s,\ell}}$ and $\omega_s=\pi_s:=I_s/\dL$, where the Bell states are embedded in sector $s$ as in the main text.
Then $F(J_s)=r_s:=\max_\ell q_s(\ell)$ and $F_n=1-\epsilon_n$.

\subsection{Balance}

As in the main text, the construction is \emph{$n$-balanced} if $\Tr\cP_n(\rho_{\mathrm L})=Z_n$ for every normalized logical input.
Write $E_s:=\cM_s^\dagger(I_s)$ for the branch effect, so $\Tr\cM_s(\rho_{\mathrm L})=\Tr(\rho_{\mathrm L}E_s)$ and $\sum_sE_s=I_{\mathrm L}$.
The effect governing acceptance is
\begin{align}
G_n:=\cP_n^\dagger(I_{SQ})=\sum_sp_s^{n-1}E_s.
\label{eq:s-effect}
\end{align}
Since $\Tr\cP_n(\rho_{\mathrm L})=\Tr(\rho_{\mathrm L}G_n)$, balance is equivalent to
\begin{align}
G_n=Z_nI_{\mathrm L}.
\label{eq:s-n-balance}
\end{align}
This is exactly the trace-preservation condition for $\widetilde{\cP}_n:=\cP_n/Z_n$, whose Choi state is then $\Theta_n$.

Without balance, normalizing the flagged output separately for each input does not define an affine map.
To prove this, write
\begin{align}
f_n(\rho)&:=\Tr\cP_n(\rho), \qquad 
\mathfrak N_n(\rho):=\frac{\cP_n(\rho)}{f_n(\rho)}.
\end{align}
Every normalized input has positive success probability, since the nonzero branches form a trace-preserving instrument and all their weights $p_s^{n-1}$ are positive.
Evaluating $\mathfrak N_n(\lambda\rho+(1-\lambda)\sigma)$ for $0<\lambda<1$ using linearity of $\cP_n$ shows that affinity would require
\begin{align}
[f_n(\rho)-f_n(\sigma)]
[\mathfrak N_n(\rho)-\mathfrak N_n(\sigma)]=0
\end{align}
for every pair of inputs, after dividing by $\lambda(1-\lambda)>0$.
If $f_n$ were nonconstant, choose $\rho,\sigma$ with different success probabilities.
Their normalized outputs must agree, and comparison of any third input with one of this pair forces the same output.
Thus $\mathfrak N_n$ would be constant.
Write this constant flagged state as $\sum_s\ketbra{s}_S\otimes\tau_s$.
Equality of the individual flag blocks gives
\begin{align}
p_s^{n-1}\cM_s(\rho)=f_n(\rho)\tau_s.
\end{align}
Dividing by $p_s^{n-1}$, taking traces, and summing over $s$ yields
\begin{align}
1=\sum_s\Tr\cM_s(\rho)
=f_n(\rho)\sum_sp_s^{1-n}\Tr\tau_s.
\end{align}
The last sum is input independent, forcing $f_n$ to be constant, a contradiction.
Hence the normalized flagged operation is affine if and only if $n$-balance holds.

A sufficient condition is \emph{branchwise trace scaling},
\begin{align}
E_s=\cM_s^\dagger(I_s)=p_sI_{\mathrm L}.
\label{eq:s-trace-scaling}
\end{align}
Then each $\cM_s/p_s$ is a channel and $G_n=\sum_sp_s^nI_{\mathrm L}=Z_nI_{\mathrm L}$ at every order.
Stochastic Pauli noise satisfies this condition because each Pauli error has a syndrome independent of the encoded logical state.
Balance therefore holds automatically for every integer $n\geq2$, without an independent-event assumption.

Trace scaling is not necessary.
For example, assume $\dL>1$ and at least $\dL$ syndrome sectors are available.
Measure an orthonormal logical basis $\{\ket{s}_{\mathrm L}\}$ and prepare a normalized state $\ket{\psi_s}\in\mathcal H_s$ in a distinct sector for each outcome:
\begin{align}
\cM_s(\rho)=\langle s|\rho|s\rangle\ketbra{\psi_s}.
\end{align}
These branches have $E_s=\ketbra{s}_{\mathrm L}$ and $p_s=1/\dL$.
Thus $G_n=\dL^{1-n}I_{\mathrm L}=Z_nI_{\mathrm L}$ at every order, although the individual effects are not proportional to the identity.

There is also an exact characterization of balance at every order.
Let $v_1,\ldots,v_{N_v}$ be the distinct nonzero values among the $p_s$, and define $\mathcal S_j:=\{s:p_s=v_j\}$.
Extend $G_n$ and $Z_n$ to $n=1$ by the same formulas, so $G_1=I_{\mathrm L}$ and $Z_1=1$.
Then
\begin{align}
G_n-Z_nI_{\mathrm L}
=\sum_{j=1}^{N_v}v_j^{n-1}
\left[\sum_{s\in\mathcal S_j}E_s-|\mathcal S_j|v_jI_{\mathrm L}\right].
\end{align}
For $n=1,\ldots,N_v$, the coefficients form an invertible Vandermonde matrix because the $v_j$ are distinct.
Since the $n=1$ equation holds automatically, balance at $n=2,\ldots,N_v$ is equivalent to
\begin{align}
\sum_{s\in\mathcal S_j}E_s=|\mathcal S_j|v_jI_{\mathrm L}
\label{eq:s-class-balance}
\end{align}
for every $j$, which in turn implies balance at every order.
Equation~\eqref{eq:s-class-balance} says that each group of syndromes with equal $p_s$ has an input-independent total probability.
It reduces to branchwise trace scaling when all nonzero $p_s$ are distinct and follows from trace preservation when they are all equal.

An explicit failure occurs for amplitude damping on the first qubit of the repetition code $\ket{0_{\mathrm L}}=\ket{00}$, $\ket{1_{\mathrm L}}=\ket{11}$~\cite{FletcherShorWin2008}.
For damping strength $\gamma\in[0,1]$, write $u:=\langle1_{\mathrm L}|\rho_{\mathrm L}|1_{\mathrm L}\rangle$.
The odd-parity syndrome has probability $\Pr(\mathrm{odd}\mid\rho_{\mathrm L})=\gamma u$, while the auxiliary syndrome probabilities are $p_{\mathrm{odd}}=\gamma/2$ and $p_{\mathrm{even}}=1-\gamma/2$.
Consequently,
\begin{align}
f_n(\rho_{\mathrm L})
=\left(1-\frac{\gamma}{2}\right)^{n-1}(1-\gamma u)
+\left(\frac{\gamma}{2}\right)^{n-1}\gamma u.
\end{align}
For $0<\gamma<1$ and $n\geq2$, the two weights differ, so acceptance depends on $u$ and balance fails.
At $\gamma=0$ or $1$, balance holds; at $\gamma=1$ the auxiliary weights are equal despite input-dependent branch probabilities.

\subsection{Approximate balance from recovery}

High recovery fidelity constrains the input dependence of the syndrome probabilities.
The reference marginal of each normalized branch state is
\begin{align}
J_{s,R}=\frac{E_s^{\mathsf T}}{\dL p_s},
\label{eq:s-reference-effect}
\end{align}
where the transpose is taken in the Schmidt basis defining $\Phi_{R\mathrm L}$.
Recovery acts only on the output and therefore leaves this marginal unchanged.
Applying trace-distance contractivity under partial trace and the Fuchs--van de Graaf inequalities~\cite{FuchsVanDeGraaf1999} to an optimally recovered branch gives
\begin{align}
\left\|\frac{E_s}{p_s}-I_{\mathrm L}\right\|_1
&=\dL\left\|J_{s,R}-\frac{I_R}{\dL}\right\|_1 \leq2\dL\sqrt{1-F(J_s)}.
\end{align}
Here the equality uses invariance of the trace norm under transposition.
Averaging over $Q_n$ and applying Jensen's inequality yields
\begin{align}
\sum_sQ_n(s)\left\|\frac{E_s}{p_s}-I_{\mathrm L}\right\|_1
&\leq2\dL\sum_sQ_n(s)\sqrt{1-F(J_s)} \leq2\dL\sqrt{1-F_n}.
\label{eq:s-average-branch-balance}
\end{align}
Since
\begin{align}
\frac{G_n}{Z_n}-I_{\mathrm L}
=\sum_sQ_n(s)\left[\frac{E_s}{p_s}-I_{\mathrm L}\right],
\end{align}
the triangle inequality and $\|A\|_\infty\leq\|A\|_1$ give
\begin{align}
\sup_{\rho_{\mathrm L}}
\left|\frac{\Tr\cP_n(\rho_{\mathrm L})}{Z_n}-1\right|
&=\left\|\frac{G_n}{Z_n}-I_{\mathrm L}\right\|_\infty \leq2\dL\sqrt{1-F_n},
\label{eq:s-average-leak}
\end{align}
where the supremum is over normalized logical inputs.
At fixed logical dimension, Eq.~\eqref{eq:s-average-leak} shows that $F_n\to1$ makes the relative input dependence of the matching probability vanish uniformly.

\section{Recovery bounds and dimension dependence}
\label{app:recovery-bounds}

\subsection{Finite-size recovery bounds}

We use the normalized branch states $J_s$, output marginals $\omega_s:=\Tr_RJ_s$, and recovery fidelities $F(J_s)$ and $F_n$ defined in Appendix~\ref{app:balance}.
Each syndrome sector has dimension $\dL$, and $\pi_s:=I_s/\dL$ denotes its maximally mixed state.
Define
\begin{align}
A_n&:=\sum_sQ_n(s)\Tr J_s^n,\nonumber\\
U_n&:=\sum_sQ_n(s)\dL^{n-1}\Tr\omega_s^n,\nonumber\\
B_n&:=e^{-(n-1)\Delta_n}=\frac{A_n}{U_n}.
\label{eq:s-moment-decomposition}
\end{align}
The last identity follows from Eq.~\eqref{eq:s-moment-factorization}.
Since each $J_s$ is normalized and each $\omega_s$ is supported on a $\dL$-dimensional sector, $0<A_n\leq1$ and $U_n\geq1$.
Thus $0<B_n\leq A_n\leq1$.
The following bounds require no balance assumption.

\begin{proposition}[Finite-size recovery bounds]
\label{prop:recovery}
For every integer $n\geq2$,
\begin{align}
F_n&\geq B_n^2,\nonumber\\
B_n&\geq\frac{F_n^n}{1+n(n-1)\dL^{n-1}(1-F_n)}.
\label{eq:s-recovery-bounds}
\end{align}
At fixed $n$ and $\dL$, $\Delta_n\to0$ if and only if $F_n\to1$.
At fixed $n$, the implication from $\Delta_n\to0$ to $F_n\to1$ is uniform in $\dL$.
\end{proposition}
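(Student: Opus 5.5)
The plan is to prove both inequalities branch by branch and then average over $Q_n$. That requires relating, for each normalized branch state $J_s$, three quantities: its optimal recovery fidelity $F(J_s)$, its moment $a_s:=\Tr J_s^n$, and the output-purity factor $u_s:=\dL^{n-1}\Tr\omega_s^n$. By definition $A_n=\sum_sQ_n(s)a_s$, $U_n=\sum_sQ_n(s)u_s$ and $B_n=A_n/U_n$, with $U_n\geq1$. The equivalence statements then follow formally. At fixed $n$, $B_n\to1$ is the same as $\Delta_n\to0$. The first inequality gives $F_n\to1$ from it with no $\dL$ dependence. The second gives the converse whenever $\dL$ is fixed, because $n(n-1)\dL^{n-1}(1-F_n)\to0$.

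For the lower bound on $B_n$, I would use the adjoint argument from the proof of Proposition~\ref{prop:pauli-recovery}. For any recovery $\cR_s$, the operator $X_s=(\id_R\otimes\cR_s^\dagger)(\Phi_{R\mathrm L})$ is positive with unit trace, since $\Tr\cR_s^\dagger(I_{\mathrm L})=\Tr I_s=\dL$. Hence $F(J_s)\leq\lambda_{\max}(J_s)$, so $a_s\geq\lambda_{\max}^n\geq F(J_s)^n$, and Jensen's inequality gives $A_n\geq F_n^n$. For $U_n$, I would write $\omega_s=\pi_s+D_s$ with $\Tr D_s=0$ and expand $\dL^{n-1}\Tr(\pi_s+D_s)^n$. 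The linear term vanishes, so $u_s-1$ is controlled by $\dL^{n-1}$ times a combination of $\Tr D_s^2$ and higher powers, all bounded through $\|D_s\|_1$. The per-branch target is $u_s\leq1+n(n-1)\dL^{n-1}(1-F(J_s))$, which is linear in $F(J_s)$, so averaging gives exactly $U_n\leq1+n(n-1)\dL^{n-1}(1-F_n)$. The needed estimate is $\|D_s\|_1^2\lesssim 1-F(J_s)$ with a constant absorbed into $n(n-1)$. I would get it from the top eigenvector $\psi_s$ of $J_s$, whose weight is at least $F(J_s)$. Its Schmidt spectrum must be nearly flat, because $F(\psi_s)\leq1$ must still carry most of the fidelity. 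So its output marginal is close to $\pi_s$, and the remainder of $J_s$ contributes trace at most $1-F(J_s)$.

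For the upper bound $F_n\geq B_n^2$, I first reduce to a per-branch claim $F(J_s)\geq a_s^2/u_s$. By Cauchy--Schwarz, $A_n=\sum_sQ_n(s)\,(a_s/\sqrt{u_s})\sqrt{u_s}\leq\bigl(\sum_sQ_n(s)a_s^2/u_s\bigr)^{1/2}U_n^{1/2}$. This gives $F_n\geq A_n^2/U_n\geq A_n^2/U_n^2=B_n^2$ using $U_n\geq1$. To prove the per-branch claim, I would exhibit an explicit recovery, namely the pretty-good (Petz-type) recovery built from $\omega_s$. Its entanglement fidelity is a collision-type expression, $\dL^{-1}\Tr[(\omega_s^{-1/4}J_s\omega_s^{-1/4})^2]$ up to normalization, i.e.\ $\dL^{-1}e^{-\tilde H_2(R|Q)}$ with $\sigma_Q=\omega_s$. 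I would then lower-bound it by $a_s^2/u_s$ using Hölder/Araki--Lieb--Thirring together with monotonicity of sandwiched conditional Rényi entropies in the order. In the Pauli case this reduces to $r_s\geq\sum_\ell q_s(\ell)^n\geq(\sum_\ell q_s(\ell)^n)^2$, consistent with the claim.

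I expect this per-branch inequality $F(J_s)\geq(\Tr J_s^n)^2/(\dL^{n-1}\Tr\omega_s^n)$ to be the main obstacle. It mixes a Petz-type unconditional moment with a conditional min-entropy-type quantity. If the Rényi-monotonicity route fails, I would fall back on the top eigenvector $\psi_s$: bound $F(J_s)\geq\lambda_{\max}F(\psi_s)$, use $\lambda_{\max}\geq a_s^{1/(n-1)}\geq a_s$, and control $F(\psi_s)=(\sum_i\sqrt{\mu_i})^2/\dL$ from below via $\lambda_{\max}\psi_{s,Q}\leq\omega_s$ and $\Tr\omega_s^n$.
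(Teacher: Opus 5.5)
Your architecture matches the paper's. It uses per-branch estimates averaged over $Q_n$, Cauchy--Schwarz to get $A_n^2\le F_nU_n$, $\Tr J_s^n\ge\lambda_{\max}(J_s)^n\ge F(J_s)^n$ for the numerator, a second-order expansion for $U_n$, and the formal equivalences at the end.

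The gap is the per-branch inequality $F(J_s)\ge(\Tr J_s^n)^2/(\dL^{n-1}\Tr\omega_s^n)$, on which the forward bound rests entirely. Neither route you sketch proves it.

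\emph{The R\'enyi route.} Araki--Lieb--Thirring bounds the sandwiched quantity $\Tr[(\omega_s^{-1/4}J_s\omega_s^{-1/4})^2]$ from above by the Petz one, which is the wrong direction. The natural use of order-monotonicity, $\tilde D_2(J_s\|I_R\otimes\omega_s)\ge H(Q)-H(RQ)$, is too weak. With $\omega_s=\pi_s$ it would require $H(RQ)\le2H_2(RQ)$, which fails for Bell-diagonal $J_s$ carrying a small weight spread over many Bell states.

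\emph{The fallback.} It fails outright at $n=2$. Take $J_s=\lambda|00\rangle\langle00|+(1-\lambda)|\chi\rangle\langle\chi|$ with $|\chi\rangle=\dL^{-1/2}\sum_i|i\rangle_R|i+1\rangle_Q$ (indices mod $\dL$) and $\lambda\downarrow1/2$. The top eigenvector is the product state $|00\rangle$, so $\lambda_{\max}F(\psi_s)=\lambda/\dL\to1/(2\dL)$. Meanwhile $(\Tr J_s^2)^2/(\dL\Tr\omega_s^2)\to1/(\dL+3)$, which is larger once $\dL\ge4$. The true value $F(J_s)\ge1-\lambda\approx1/2$ comes from the second eigenvector, which your bound discards. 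For $n\ge3$ the fallback does close: H\"older gives $(\sum_i\sqrt{\mu_i})^2\sum_i\mu_i^n\ge\dL^{2-n}$, and you combine it with $\Tr J_s^n\le\lambda_{\max}^{n-1}$ and $\omega_s\ge\lambda_{\max}\psi_{s,Q}$. But $n=2$ is exactly the order to which the general case reduces.

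\emph{The missing idea.} You need a Hilbert--Schmidt Cauchy--Schwarz with a specific split. $\Tr J_s^2$ is the inner product of $\omega_s^{-1/4}J_s\omega_s^{-1/4}$ and $\omega_s^{1/4}J_s\omega_s^{1/4}$, so $(\Tr J_s^2)^2\le\dL\Tr(J_s\widehat X_s)\,\Tr(J_s\omega_s^{1/2}J_s\omega_s^{1/2})$. Here $\widehat X_s=\dL^{-1}\omega_s^{-1/2}J_s\omega_s^{-1/2}+I_R\otimes(I_s-P_s)/\dL^2$ is your pretty-good recovery, completed on $\ker\omega_s$ so that $\Tr_R\widehat X_s=\pi_s$, with $P_s$ the support projector of $\omega_s$. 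Since $J_s\le I$, you have $\omega_s^{1/2}J_s\omega_s^{1/2}\le I_R\otimes\omega_s$, so the second factor is at most $\Tr(J_s\omega_s)=\Tr\omega_s^2$. This yields $(\Tr J_s^2)^2\le\dL F(J_s)\Tr\omega_s^2$. Then $\Tr J_s^n\le\Tr J_s^2$ and $\dL\Tr\omega_s^2\le(\dL^{n-1}\Tr\omega_s^n)^{1/(n-1)}\le\dL^{n-1}\Tr\omega_s^n$ lift it to order $n$.

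A second, smaller gap is in the converse. Your per-branch target $u_s\le1+n(n-1)\dL^{n-1}(1-F(J_s))$ is the right one, but the coefficient $n(n-1)$ is explicit in the statement, so nothing can be ``absorbed.'' You need exactly $\|\omega_s-\pi_s\|_1^2\le4(1-F(J_s))$. Combine it with the Lagrange remainder (the second derivative of $x^n$ is at most $n(n-1)$ on $[0,1]$) and with $\|D\|_2^2\le\frac12\|D\|_1^2$ for traceless Hermitian $D$.

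Splitting $J_s$ into $\psi_s$ and a remainder adds an $O(1-F(J_s))$ contribution from the remainder on top of the $2\sqrt{\cdot}$ term from $\psi_s$. As sketched, it therefore only gives $\|\omega_s-\pi_s\|_1^2\le C(1-F(J_s))$ with a larger constant, e.g.\ $C=16$.

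The clean route avoids eigenvectors. Take the optimal feasible $X_s$: positive, with $\Tr_RX_s=\pi_s$ and $\Tr(J_sX_s)=F(J_s)$. Then $\|\sqrt{J_s}\sqrt{X_s}\|_1^2\ge\|\sqrt{J_s}\sqrt{X_s}\|_2^2=F(J_s)$. Fuchs--van de Graaf plus contractivity under $\Tr_R$ give $\|\omega_s-\pi_s\|_1\le\|J_s-X_s\|_1\le2\sqrt{1-F(J_s)}$. With a larger constant, your route would still give the qualitative equivalence at fixed $\dL$, but not the displayed inequality.
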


\begin{proof}
Write $f_s:=F(J_s)$.
For a recovery channel $\cR_s:\mathcal H_s\to\mathcal H_{\mathrm L}$, the operator $X_s:=(\id_R\otimes\cR_s^\dagger)(\Phi_{R\mathrm L})$ is positive and satisfies $\Tr_RX_s=\pi_s$ because $\cR_s^\dagger$ is unital.
Conversely, channel--state duality identifies every operator satisfying these constraints with the adjoint of a recovery channel~\cite{Choi1975,Jamiolkowski1972}.
Moving the recovery map to the other factor in the fidelity overlap therefore gives
\begin{align}
f_s=\max\left\{\Tr(J_sX_s):X_s\geq0,\Tr_RX_s=\pi_s\right\}.
\label{eq:s-SDP}
\end{align}
Every feasible $X_s$ has unit trace and is therefore a normalized density operator.

\emph{Forward bound.--}
Let $P_s$ project onto the support of $\omega_s$.
All powers of $\omega_s$ act on the syndrome factor, and inverse powers are taken on its support and set to zero on its kernel.
Consider
\begin{align}
\widehat X_s
&:=\frac{1}{\dL}\omega_s^{-1/2}J_s\omega_s^{-1/2}
+\frac{I_R\otimes(I_s-P_s)}{\dL^2}.
\label{eq:s-recovery-candidate}
\end{align}
This operator is positive, and $\Tr_R\widehat X_s=P_s/\dL+(I_s-P_s)/\dL=\pi_s$, so it is feasible in Eq.~\eqref{eq:s-SDP}.
Moreover, $J_s$ is supported on $\mathcal H_R\otimes\operatorname{supp}\omega_s$, so the second term has zero overlap with $J_s$.
Hilbert--Schmidt Cauchy--Schwarz gives
\begin{align}
(\Tr J_s^2)^2
&\leq\Tr\left[(\omega_s^{-1/4}J_s\omega_s^{-1/4})^2\right]
\Tr\left[(\omega_s^{1/4}J_s\omega_s^{1/4})^2\right]\nonumber\\
&=\dL\Tr(J_s\widehat X_s)
\Tr(J_s\omega_s^{1/2}J_s\omega_s^{1/2}) \leq\dL f_s\Tr\omega_s^2.
\label{eq:s-recovery-purity}
\end{align}
The first inequality uses that the Hilbert--Schmidt inner product of the two sandwiched operators is $\Tr J_s^2$.
For the last inequality, feasibility gives $\Tr(J_s\widehat X_s)\leq f_s$, while $J_s\leq I_R\otimes I_s$ gives $\omega_s^{1/2}J_s\omega_s^{1/2}\leq\omega_s$.
Tracing the latter inequality against $J_s\geq0$ bounds the second factor by $\Tr(J_s\omega_s)=\Tr\omega_s^2$.

For $n\geq2$, normalization gives $\Tr J_s^n\leq\Tr J_s^2$.
If $\lambda_i$ are the eigenvalues of $\omega_s$, Jensen's inequality with weights $\lambda_i$ gives $(\sum_i\lambda_i^2)^{n-1}\leq\sum_i\lambda_i^n$.
Consequently,
\begin{align}
\dL\Tr\omega_s^2
&\leq\left(\dL^{n-1}\Tr\omega_s^n\right)^{1/(n-1)} \leq\dL^{n-1}\Tr\omega_s^n,
\end{align}
where the second inequality uses $\dL^{n-1}\Tr\omega_s^n\geq1$.
Equation~\eqref{eq:s-recovery-purity} therefore implies $(\Tr J_s^n)^2\leq f_s\dL^{n-1}\Tr\omega_s^n$.
Averaging over $Q_n$ and applying Cauchy--Schwarz yields
\begin{align}
A_n^2
&\leq\left(\sum_sQ_n(s)\sqrt{f_s\dL^{n-1}\Tr\omega_s^n}\right)^2 \leq F_nU_n,\nonumber\\
F_n&\geq\frac{A_n^2}{U_n}=A_nB_n\geq B_n^2.
\label{eq:s-master-forward}
\end{align}
The last step uses $A_n\geq B_n$.

\emph{Converse bound.--}
Let $X_s$ be an optimizer in Eq.~\eqref{eq:s-SDP}.
For the squared Uhlmann fidelity $\mathcal F(\rho,\sigma):=\|\sqrt\rho\sqrt\sigma\|_1^2$~\cite{Uhlmann1976}, the inequality $\|A\|_1\geq\|A\|_2$ gives $\mathcal F(J_s,X_s)\geq\Tr(J_sX_s)=f_s$.
The Fuchs--van de Graaf inequalities~\cite{FuchsVanDeGraaf1999} and contractivity of trace distance under partial trace then give
\begin{align}
\|\omega_s-\pi_s\|_1
&\leq\|J_s-X_s\|_1 \leq2\sqrt{1-f_s}.
\label{eq:s-tau-F}
\end{align}

The second derivative of $x^n$ is at most $n(n-1)$ on $[0,1]$.
Apply Taylor's theorem about $1/\dL$ to each eigenvalue of $\omega_s$.
The linear terms sum to zero because $\Tr\omega_s=1$, so
\begin{align}
\Tr\omega_s^n-\dL^{1-n}
&\leq\frac{n(n-1)}{2}\|\omega_s-\pi_s\|_2^2 \leq\frac{n(n-1)}{4}\|\omega_s-\pi_s\|_1^2 \leq n(n-1)(1-f_s).
\label{eq:s-power-quadratic}
\end{align}
Here $\|\cdot\|_2$ is the Hilbert--Schmidt norm.
For the second inequality, the positive and negative parts of a traceless Hermitian operator $D$ have equal trace, which gives $\|D\|_2^2\leq\|D\|_1^2/2$.
The last inequality follows from Eq.~\eqref{eq:s-tau-F}.
Multiplying by $\dL^{n-1}$ and averaging over $Q_n$ gives
\begin{align}
U_n\leq1+n(n-1)\dL^{n-1}(1-F_n).
\label{eq:s-output-moment-converse}
\end{align}
Every feasible $X_s$ is a density operator, so $f_s\leq\lambda_{\max}(J_s)$ and hence $\Tr J_s^n\geq f_s^n$.
Jensen's inequality gives $A_n\geq\sum_sQ_n(s)f_s^n\geq F_n^n$.
Combining the numerator and denominator bounds proves
\begin{align}
B_n\geq\frac{F_n^n}{1+n(n-1)\dL^{n-1}(1-F_n)}.
\label{eq:s-master-converse}
\end{align}
\end{proof}

In particular, the forward bound gives a dimension-independent estimate linear in the deficit:
\begin{align}
1-F_n
&\leq1-e^{-2(n-1)\Delta_n} \leq2(n-1)\Delta_n.
\label{eq:s-deficit-error}
\end{align}

\subsection{Dependence on the logical dimension}

\emph{Maximally mixed output marginals.--}
Suppose $\omega_s=\pi_s$ for every branch with $p_s>0$, equivalently $\cM_s(I_{\mathrm L})=p_sI_s$.
Then $U_n=1$, and $X_s=J_s$ is feasible in Eq.~\eqref{eq:s-SDP}.
Thus $\Tr J_s^n\leq\Tr J_s^2\leq f_s$.
Averaging and combining with $A_n\geq F_n^n$ yields
\begin{align}
F_n^n\leq B_n\leq F_n.
\label{eq:s-unital-delta}
\end{align}
At fixed $n$, $\Delta_n\to0$ is therefore equivalent to $F_n\to1$, even when $\dL$ grows.
No condition on the reference marginals is needed.
Branchwise trace scaling, $\cM_s^\dagger(I_s)=p_sI_{\mathrm L}$, is a separate sufficient condition for balance, as discussed in Appendix~\ref{app:balance}.
Stochastic Pauli branches satisfy both conditions and have $F_n=1-\epsilon_n$, recovering the main-text bound without the independent-event assumption.

\emph{Nonunital obstruction.--}
Fix an integer $n\geq2$ and consider two branches with outputs in distinct syndrome sectors.
Identifying each sector with the logical space, take an identity branch $\cM_0(\rho)=p_0\rho$ and a replacer branch $\cM_\star(\rho)=p_\star\ketbra{0}\Tr\rho$, where $p_0+p_\star=1$.
Both branches are trace scaling, so balance holds at every order.
To give the replacer a prescribed matched-syndrome weight $Q_n(\star)=q\in(0,1)$, choose $p_\star/p_0=[q/(1-q)]^{1/n}$.
The normalized branch states are $J_0=\Phi_{R\mathrm L}$ and $J_\star=(I_R/\dL)\otimes\ketbra{0}$.
The identity branch has recovery fidelity one.
Any recovery of the replacer branch leaves a product state $(I_R/\dL)\otimes\tau_{\mathrm L}$, whose overlap with $\Phi_{R\mathrm L}$ is $\dL^{-2}$.
Consequently,
\begin{align}
F_n&=1-q+q\dL^{-2}, \qquad A_n=1-q+q\dL^{1-n}, \qquad U_n=1-q+q\dL^{n-1}.
\label{eq:s-rare-replacer-exact}
\end{align}
The replacer contributes at most $q$ to the recovery error but $q\dL^{n-1}$ to the output moment $U_n$.
Thus a branch of vanishing weight can still dominate the denominator of $B_n$.

Taking $q=1/\log\dL$ for sufficiently large $\dL$ gives $F_n\to1$, $A_n\to1$, and $U_n\sim\dL^{n-1}/\log\dL$.
Hence
\begin{align}
\Delta_n
=\log\dL-\frac{\log\log\dL}{n-1}+o(1).
\label{eq:s-rare-replacer-summary}
\end{align}
High recovery fidelity therefore does not imply a vanishing deficit uniformly in dimension, even under exact balance.

The same example determines the necessary dimension scaling for a converse of the form in Eq.~\eqref{eq:s-master-converse}.
Taking $q=c\dL^{1-n}$ with fixed $c>0$ gives
\begin{align}
1-F_n&\sim c\dL^{1-n}, \qquad B_n\longrightarrow\frac{1}{1+c}.
\end{align}
Suppose a bound $B_n\geq F_n^n/[1+C_ng(\dL)(1-F_n)]$ held uniformly in $\dL$, with $C_n$ independent of $\dL$ and $g(\dL)=o(\dL^{n-1})$.
Its right-hand side would tend to one, contradicting the displayed limit of $B_n$.
Therefore, the dimension scaling in Eq.~\eqref{eq:s-master-converse} is sharp within this class of bounds, up to its dimension-independent coefficient.

\vspace{5pt} \noindent {\bf Remark.}  
The bounds concern the normalized postselected state $\Theta_n$, with recovery optimized over all syndrome-conditioned channels.
Under $n$-balance, $F_n$ is also the optimal recovered entanglement fidelity of $\widetilde{\cP}_n$ at input $\piL$.
For stochastic Pauli noise, syndrome measurement leaves the state unchanged and the deficit agrees with the unmeasured replica quantity; for general noise, it can remove coherences between sectors.
Efficient decoding requires a further analysis of the recovery implementation.

The distinction between absolute and rate-normalized saturation is illustrated by a channel that acts identically on $k-1$ logical qubits and fully dephases the last.
With a single syndrome,
\begin{align}
I_c^{(n)}&=(k-1)\log2,\qquad \Delta_n=\log2, \qquad \frac{I_c^{(n)}}{k\log2}\longrightarrow1,
\end{align}
while $F_n=1/2$ for every $k$.


\begin{thebibliography}{50}%
\makeatletter
\providecommand \@ifxundefined [1]{%
 \@ifx{#1\undefined}
}%
\providecommand \@ifnum [1]{%
 \ifnum #1\expandafter \@firstoftwo
 \else \expandafter \@secondoftwo
 \fi
}%
\providecommand \@ifx [1]{%
 \ifx #1\expandafter \@firstoftwo
 \else \expandafter \@secondoftwo
 \fi
}%
\providecommand \natexlab [1]{#1}%
\providecommand \enquote  [1]{``#1''}%
\providecommand \bibnamefont  [1]{#1}%
\providecommand \bibfnamefont [1]{#1}%
\providecommand \citenamefont [1]{#1}%
\providecommand \href@noop [0]{\@secondoftwo}%
\providecommand \href [0]{\begingroup \@sanitize@url \@href}%
\providecommand \@href[1]{\@@startlink{#1}\@@href}%
\providecommand \@@href[1]{\endgroup#1\@@endlink}%
\providecommand \@sanitize@url [0]{\catcode `\\12\catcode `\$12\catcode `\&12\catcode `\#12\catcode `\^12\catcode `\_12\catcode `\%12\relax}%
\providecommand \@@startlink[1]{}%
\providecommand \@@endlink[0]{}%
\providecommand \url  [0]{\begingroup\@sanitize@url \@url }%
\providecommand \@url [1]{\endgroup\@href {#1}{\urlprefix }}%
\providecommand \urlprefix  [0]{URL }%
\providecommand \Eprint [0]{\href }%
\providecommand \doibase [0]{https://doi.org/}%
\providecommand \selectlanguage [0]{\@gobble}%
\providecommand \bibinfo  [0]{\@secondoftwo}%
\providecommand \bibfield  [0]{\@secondoftwo}%
\providecommand \translation [1]{[#1]}%
\providecommand \BibitemOpen [0]{}%
\providecommand \bibitemStop [0]{}%
\providecommand \bibitemNoStop [0]{.\EOS\space}%
\providecommand \EOS [0]{\spacefactor3000\relax}%
\providecommand \BibitemShut  [1]{\csname bibitem#1\endcsname}%
\let\auto@bib@innerbib\@empty
\bibitem [{\citenamefont {Lee}\ \emph {et~al.}(2025)\citenamefont {Lee}, \citenamefont {You},\ and\ \citenamefont {Xu}}]{lee2022symmetry}%
  \BibitemOpen
  \bibfield  {author} {\bibinfo {author} {\bibfnamefont {J.~Y.}\ \bibnamefont {Lee}}, \bibinfo {author} {\bibfnamefont {Y.-Z.}\ \bibnamefont {You}},\ and\ \bibinfo {author} {\bibfnamefont {C.}~\bibnamefont {Xu}},\ }\bibfield  {title} {\bibinfo {title} {Symmetry protected topological phases under decoherence},\ }\href {https://doi.org/10.22331/q-2025-01-23-1607} {\bibfield  {journal} {\bibinfo  {journal} {{Quantum}}\ }\textbf {\bibinfo {volume} {9}},\ \bibinfo {pages} {1607} (\bibinfo {year} {2025})}\BibitemShut {NoStop}%
\bibitem [{\citenamefont {Lee}\ \emph {et~al.}(2023)\citenamefont {Lee}, \citenamefont {Jian},\ and\ \citenamefont {Xu}}]{Lee_2023}%
  \BibitemOpen
  \bibfield  {author} {\bibinfo {author} {\bibfnamefont {J.~Y.}\ \bibnamefont {Lee}}, \bibinfo {author} {\bibfnamefont {C.-M.}\ \bibnamefont {Jian}},\ and\ \bibinfo {author} {\bibfnamefont {C.}~\bibnamefont {Xu}},\ }\bibfield  {title} {\bibinfo {title} {Quantum criticality under decoherence or weak measurement},\ }\bibfield  {journal} {\bibinfo  {journal} {PRX Quantum}\ }\textbf {\bibinfo {volume} {4}},\ \href {https://doi.org/10.1103/prxquantum.4.030317} {10.1103/prxquantum.4.030317} (\bibinfo {year} {2023})\BibitemShut {NoStop}%
\bibitem [{\citenamefont {Fan}\ \emph {et~al.}(2024)\citenamefont {Fan}, \citenamefont {Bao}, \citenamefont {Altman},\ and\ \citenamefont {Vishwanath}}]{Fan2024}%
  \BibitemOpen
  \bibfield  {author} {\bibinfo {author} {\bibfnamefont {R.}~\bibnamefont {Fan}}, \bibinfo {author} {\bibfnamefont {Y.}~\bibnamefont {Bao}}, \bibinfo {author} {\bibfnamefont {E.}~\bibnamefont {Altman}},\ and\ \bibinfo {author} {\bibfnamefont {A.}~\bibnamefont {Vishwanath}},\ }\bibfield  {title} {\bibinfo {title} {Diagnostics of mixed-state topological order and breakdown of quantum memory},\ }\href {https://doi.org/10.1103/PRXQuantum.5.020343} {\bibfield  {journal} {\bibinfo  {journal} {PRX Quantum}\ }\textbf {\bibinfo {volume} {5}},\ \bibinfo {pages} {020343} (\bibinfo {year} {2024})},\ \Eprint {https://arxiv.org/abs/2301.05689} {arXiv:2301.05689} \BibitemShut {NoStop}%
\bibitem [{\citenamefont {Lee}(2025)}]{LeeCI2024}%
  \BibitemOpen
  \bibfield  {author} {\bibinfo {author} {\bibfnamefont {J.~Y.}\ \bibnamefont {Lee}},\ }\bibfield  {title} {\bibinfo {title} {Exact calculations of coherent information for toric codes under decoherence: Identifying the fundamental error threshold},\ }\href {https://doi.org/10.1103/hlfh-86yz} {\bibfield  {journal} {\bibinfo  {journal} {Phys. Rev. Lett.}\ }\textbf {\bibinfo {volume} {134}},\ \bibinfo {pages} {250601} (\bibinfo {year} {2025})}\BibitemShut {NoStop}%
\bibitem [{\citenamefont {Niwa}\ and\ \citenamefont {Lee}(2025)}]{NiwaLee2025}%
  \BibitemOpen
  \bibfield  {author} {\bibinfo {author} {\bibfnamefont {R.}~\bibnamefont {Niwa}}\ and\ \bibinfo {author} {\bibfnamefont {J.~Y.}\ \bibnamefont {Lee}},\ }\bibfield  {title} {\bibinfo {title} {Coherent information for {Calderbank--Shor--Steane} codes under decoherence},\ }\href {https://doi.org/10.1103/PhysRevA.111.032402} {\bibfield  {journal} {\bibinfo  {journal} {Physical Review A}\ }\textbf {\bibinfo {volume} {111}},\ \bibinfo {pages} {032402} (\bibinfo {year} {2025})},\ \Eprint {https://arxiv.org/abs/2407.02564} {arXiv:2407.02564} \BibitemShut {NoStop}%
\bibitem [{\citenamefont {Kim}\ \emph {et~al.}(2026)\citenamefont {Kim}, \citenamefont {Altman},\ and\ \citenamefont {Lee}}]{kim2024errorthresholdsykcodes}%
  \BibitemOpen
  \bibfield  {author} {\bibinfo {author} {\bibfnamefont {J.}~\bibnamefont {Kim}}, \bibinfo {author} {\bibfnamefont {E.}~\bibnamefont {Altman}},\ and\ \bibinfo {author} {\bibfnamefont {J.~Y.}\ \bibnamefont {Lee}},\ }\bibfield  {title} {\bibinfo {title} {Error threshold of sachdev-ye-kitaev models from strong-to-weak parity symmetry breaking},\ }\href {https://doi.org/10.1103/srf2-1f6d} {\bibfield  {journal} {\bibinfo  {journal} {Phys. Rev. B}\ }\textbf {\bibinfo {volume} {114}},\ \bibinfo {pages} {L051101} (\bibinfo {year} {2026})}\BibitemShut {NoStop}%
\bibitem [{\citenamefont {Li}\ and\ \citenamefont {Mong}(2025)}]{LiMong2025}%
  \BibitemOpen
  \bibfield  {author} {\bibinfo {author} {\bibfnamefont {Z.}~\bibnamefont {Li}}\ and\ \bibinfo {author} {\bibfnamefont {R.~S.~K.}\ \bibnamefont {Mong}},\ }\bibfield  {title} {\bibinfo {title} {Replica topological order in quantum mixed states and quantum error correction},\ }\href {https://doi.org/10.1103/PhysRevB.111.125106} {\bibfield  {journal} {\bibinfo  {journal} {Physical Review B}\ }\textbf {\bibinfo {volume} {111}},\ \bibinfo {pages} {125106} (\bibinfo {year} {2025})},\ \Eprint {https://arxiv.org/abs/2402.09516} {arXiv:2402.09516} \BibitemShut {NoStop}%
\bibitem [{\citenamefont {Lyons}(2024)}]{Lyons2024}%
  \BibitemOpen
  \bibfield  {author} {\bibinfo {author} {\bibfnamefont {A.}~\bibnamefont {Lyons}},\ }\href@noop {} {\bibinfo {title} {Understanding stabilizer codes under local decoherence through a general statistical mechanics mapping}} (\bibinfo {year} {2024}),\ \Eprint {https://arxiv.org/abs/2403.03955} {arXiv:2403.03955 [quant-ph]} \BibitemShut {NoStop}%
\bibitem [{\citenamefont {Chen}\ and\ \citenamefont {Grover}(2024)}]{Chen2024PRL}%
  \BibitemOpen
  \bibfield  {author} {\bibinfo {author} {\bibfnamefont {Y.-H.}\ \bibnamefont {Chen}}\ and\ \bibinfo {author} {\bibfnamefont {T.}~\bibnamefont {Grover}},\ }\bibfield  {title} {\bibinfo {title} {Separability transitions in topological states induced by local decoherence},\ }\href {https://doi.org/10.1103/PhysRevLett.132.170602} {\bibfield  {journal} {\bibinfo  {journal} {Phys. Rev. Lett.}\ }\textbf {\bibinfo {volume} {132}},\ \bibinfo {pages} {170602} (\bibinfo {year} {2024})}\BibitemShut {NoStop}%
\bibitem [{\citenamefont {Wang}\ \emph {et~al.}(2025{\natexlab{a}})\citenamefont {Wang}, \citenamefont {Wu},\ and\ \citenamefont {Wang}}]{Wang2025}%
  \BibitemOpen
  \bibfield  {author} {\bibinfo {author} {\bibfnamefont {Z.}~\bibnamefont {Wang}}, \bibinfo {author} {\bibfnamefont {Z.}~\bibnamefont {Wu}},\ and\ \bibinfo {author} {\bibfnamefont {Z.}~\bibnamefont {Wang}},\ }\bibfield  {title} {\bibinfo {title} {Intrinsic mixed-state topological order},\ }\href {https://doi.org/10.1103/PRXQuantum.6.010314} {\bibfield  {journal} {\bibinfo  {journal} {PRX Quantum}\ }\textbf {\bibinfo {volume} {6}},\ \bibinfo {pages} {010314} (\bibinfo {year} {2025}{\natexlab{a}})}\BibitemShut {NoStop}%
\bibitem [{\citenamefont {Sohal}\ and\ \citenamefont {Prem}(2025)}]{Sohal2025}%
  \BibitemOpen
  \bibfield  {author} {\bibinfo {author} {\bibfnamefont {R.}~\bibnamefont {Sohal}}\ and\ \bibinfo {author} {\bibfnamefont {A.}~\bibnamefont {Prem}},\ }\bibfield  {title} {\bibinfo {title} {Noisy approach to intrinsically mixed-state topological order},\ }\href {https://doi.org/10.1103/PRXQuantum.6.010313} {\bibfield  {journal} {\bibinfo  {journal} {PRX Quantum}\ }\textbf {\bibinfo {volume} {6}},\ \bibinfo {pages} {010313} (\bibinfo {year} {2025})}\BibitemShut {NoStop}%
\bibitem [{\citenamefont {Yang}\ \emph {et~al.}(2025)\citenamefont {Yang}, \citenamefont {Shi},\ and\ \citenamefont {Lee}}]{yang2025topologicalmixedstatesphases}%
  \BibitemOpen
  \bibfield  {author} {\bibinfo {author} {\bibfnamefont {T.-H.}\ \bibnamefont {Yang}}, \bibinfo {author} {\bibfnamefont {B.}~\bibnamefont {Shi}},\ and\ \bibinfo {author} {\bibfnamefont {J.~Y.}\ \bibnamefont {Lee}},\ }\href {https://arxiv.org/abs/2506.04221} {\bibinfo {title} {Topological mixed states: Phases of matter from axiomatic approaches}} (\bibinfo {year} {2025}),\ \Eprint {https://arxiv.org/abs/2506.04221} {arXiv:2506.04221 [cond-mat.str-el]} \BibitemShut {NoStop}%
\bibitem [{\citenamefont {Su}\ \emph {et~al.}(2024)\citenamefont {Su}, \citenamefont {Yang},\ and\ \citenamefont {Jian}}]{SuYangJian2024}%
  \BibitemOpen
  \bibfield  {author} {\bibinfo {author} {\bibfnamefont {K.}~\bibnamefont {Su}}, \bibinfo {author} {\bibfnamefont {Z.}~\bibnamefont {Yang}},\ and\ \bibinfo {author} {\bibfnamefont {C.-M.}\ \bibnamefont {Jian}},\ }\bibfield  {title} {\bibinfo {title} {Tapestry of dualities in decohered quantum error correction codes},\ }\href {https://doi.org/10.1103/PhysRevB.110.085158} {\bibfield  {journal} {\bibinfo  {journal} {Physical Review B}\ }\textbf {\bibinfo {volume} {110}},\ \bibinfo {pages} {085158} (\bibinfo {year} {2024})},\ \Eprint {https://arxiv.org/abs/2401.17359} {arXiv:2401.17359} \BibitemShut {NoStop}%
\bibitem [{\citenamefont {{Vijay}}\ and\ \citenamefont {{Lee}}(2025{\natexlab{a}})}]{vijay2025informationcriticalphasesdecoherence}%
  \BibitemOpen
  \bibfield  {author} {\bibinfo {author} {\bibfnamefont {A.}~\bibnamefont {{Vijay}}}\ and\ \bibinfo {author} {\bibfnamefont {J.~Y.}\ \bibnamefont {{Lee}}},\ }\bibfield  {title} {\bibinfo {title} {{Information Critical Phases under Decoherence}},\ }\href {https://doi.org/10.48550/arXiv.2512.22121} {\bibfield  {journal} {\bibinfo  {journal} {arXiv e-prints}\ ,\ \bibinfo {eid} {arXiv:2512.22121}} (\bibinfo {year} {2025}{\natexlab{a}})},\ \Eprint {https://arxiv.org/abs/2512.22121} {arXiv:2512.22121 [quant-ph]} \BibitemShut {NoStop}%
\bibitem [{\citenamefont {{Vijay}}\ and\ \citenamefont {{Lee}}(2025{\natexlab{b}})}]{vijay2025holographicallyemergentgaugetheory}%
  \BibitemOpen
  \bibfield  {author} {\bibinfo {author} {\bibfnamefont {A.}~\bibnamefont {{Vijay}}}\ and\ \bibinfo {author} {\bibfnamefont {J.~Y.}\ \bibnamefont {{Lee}}},\ }\bibfield  {title} {\bibinfo {title} {{Holographically Emergent Gauge Theory in Symmetric Quantum Circuits}},\ }\href {https://doi.org/10.48550/arXiv.2511.21685} {\bibfield  {journal} {\bibinfo  {journal} {arXiv e-prints}\ ,\ \bibinfo {eid} {arXiv:2511.21685}} (\bibinfo {year} {2025}{\natexlab{b}})},\ \Eprint {https://arxiv.org/abs/2511.21685} {arXiv:2511.21685 [quant-ph]} \BibitemShut {NoStop}%
\bibitem [{\citenamefont {Colmenarez}\ \emph {et~al.}(2024)\citenamefont {Colmenarez}, \citenamefont {Huang}, \citenamefont {Diehl},\ and\ \citenamefont {M\"uller}}]{PhysRevResearch.6.L042014}%
  \BibitemOpen
  \bibfield  {author} {\bibinfo {author} {\bibfnamefont {L.}~\bibnamefont {Colmenarez}}, \bibinfo {author} {\bibfnamefont {Z.-M.}\ \bibnamefont {Huang}}, \bibinfo {author} {\bibfnamefont {S.}~\bibnamefont {Diehl}},\ and\ \bibinfo {author} {\bibfnamefont {M.}~\bibnamefont {M\"uller}},\ }\bibfield  {title} {\bibinfo {title} {Accurate optimal quantum error correction thresholds from coherent information},\ }\href {https://doi.org/10.1103/PhysRevResearch.6.L042014} {\bibfield  {journal} {\bibinfo  {journal} {Phys. Rev. Res.}\ }\textbf {\bibinfo {volume} {6}},\ \bibinfo {pages} {L042014} (\bibinfo {year} {2024})}\BibitemShut {NoStop}%
\bibitem [{\citenamefont {Temkin}\ \emph {et~al.}(2025)\citenamefont {Temkin}, \citenamefont {Weinstein}, \citenamefont {Fan}, \citenamefont {Podolsky},\ and\ \citenamefont {Altman}}]{temkin2025chargeinformedquantumerrorcorrection}%
  \BibitemOpen
  \bibfield  {author} {\bibinfo {author} {\bibfnamefont {V.}~\bibnamefont {Temkin}}, \bibinfo {author} {\bibfnamefont {Z.}~\bibnamefont {Weinstein}}, \bibinfo {author} {\bibfnamefont {R.}~\bibnamefont {Fan}}, \bibinfo {author} {\bibfnamefont {D.}~\bibnamefont {Podolsky}},\ and\ \bibinfo {author} {\bibfnamefont {E.}~\bibnamefont {Altman}},\ }\href {https://arxiv.org/abs/2512.22119} {\bibinfo {title} {Charge-informed quantum error correction}} (\bibinfo {year} {2025}),\ \Eprint {https://arxiv.org/abs/2512.22119} {arXiv:2512.22119 [quant-ph]} \BibitemShut {NoStop}%
\bibitem [{\citenamefont {Wang}\ \emph {et~al.}(2025{\natexlab{b}})\citenamefont {Wang}, \citenamefont {Fan}, \citenamefont {Wang}, \citenamefont {Garratt},\ and\ \citenamefont {Altman}}]{wang2025fractionalquantumhallstates}%
  \BibitemOpen
  \bibfield  {author} {\bibinfo {author} {\bibfnamefont {Z.}~\bibnamefont {Wang}}, \bibinfo {author} {\bibfnamefont {R.}~\bibnamefont {Fan}}, \bibinfo {author} {\bibfnamefont {T.}~\bibnamefont {Wang}}, \bibinfo {author} {\bibfnamefont {S.~J.}\ \bibnamefont {Garratt}},\ and\ \bibinfo {author} {\bibfnamefont {E.}~\bibnamefont {Altman}},\ }\href {https://arxiv.org/abs/2510.08490} {\bibinfo {title} {Fractional quantum hall states under density decoherence}} (\bibinfo {year} {2025}{\natexlab{b}}),\ \Eprint {https://arxiv.org/abs/2510.08490} {arXiv:2510.08490 [cond-mat.str-el]} \BibitemShut {NoStop}%
\bibitem [{\citenamefont {Sommers}\ \emph {et~al.}(2026)\citenamefont {Sommers}, \citenamefont {Jacoby}, \citenamefont {Weinstein}, \citenamefont {Huse},\ and\ \citenamefont {Gopalakrishnan}}]{d4wh-hqcp}%
  \BibitemOpen
  \bibfield  {author} {\bibinfo {author} {\bibfnamefont {G.~M.}\ \bibnamefont {Sommers}}, \bibinfo {author} {\bibfnamefont {J.~A.}\ \bibnamefont {Jacoby}}, \bibinfo {author} {\bibfnamefont {Z.}~\bibnamefont {Weinstein}}, \bibinfo {author} {\bibfnamefont {D.~A.}\ \bibnamefont {Huse}},\ and\ \bibinfo {author} {\bibfnamefont {S.}~\bibnamefont {Gopalakrishnan}},\ }\bibfield  {title} {\bibinfo {title} {Spectral properties and coding transitions of haar-random quantum codes},\ }\href {https://doi.org/10.1103/d4wh-hqcp} {\bibfield  {journal} {\bibinfo  {journal} {PRX Quantum}\ }\textbf {\bibinfo {volume} {7}},\ \bibinfo {pages} {020328} (\bibinfo {year} {2026})}\BibitemShut {NoStop}%
\bibitem [{\citenamefont {Dennis}\ \emph {et~al.}(2002)\citenamefont {Dennis}, \citenamefont {Kitaev}, \citenamefont {Landahl},\ and\ \citenamefont {Preskill}}]{DennisKitaevLandahlPreskill2002}%
  \BibitemOpen
  \bibfield  {author} {\bibinfo {author} {\bibfnamefont {E.}~\bibnamefont {Dennis}}, \bibinfo {author} {\bibfnamefont {A.}~\bibnamefont {Kitaev}}, \bibinfo {author} {\bibfnamefont {A.}~\bibnamefont {Landahl}},\ and\ \bibinfo {author} {\bibfnamefont {J.}~\bibnamefont {Preskill}},\ }\bibfield  {title} {\bibinfo {title} {Topological quantum memory},\ }\href {https://doi.org/10.1063/1.1499754} {\bibfield  {journal} {\bibinfo  {journal} {Journal of Mathematical Physics}\ }\textbf {\bibinfo {volume} {43}},\ \bibinfo {pages} {4452} (\bibinfo {year} {2002})},\ \Eprint {https://arxiv.org/abs/quant-ph/0110143} {arXiv:quant-ph/0110143} \BibitemShut {NoStop}%
\bibitem [{\citenamefont {Wang}\ \emph {et~al.}(2003)\citenamefont {Wang}, \citenamefont {Harrington},\ and\ \citenamefont {Preskill}}]{Wang_2003}%
  \BibitemOpen
  \bibfield  {author} {\bibinfo {author} {\bibfnamefont {C.}~\bibnamefont {Wang}}, \bibinfo {author} {\bibfnamefont {J.}~\bibnamefont {Harrington}},\ and\ \bibinfo {author} {\bibfnamefont {J.}~\bibnamefont {Preskill}},\ }\bibfield  {title} {\bibinfo {title} {Confinement-higgs transition in a disordered gauge theory and the accuracy threshold for quantum memory},\ }\href {https://doi.org/10.1016/s0003-4916(02)00019-2} {\bibfield  {journal} {\bibinfo  {journal} {Annals of Physics}\ }\textbf {\bibinfo {volume} {303}},\ \bibinfo {pages} {31–58} (\bibinfo {year} {2003})}\BibitemShut {NoStop}%
\bibitem [{\citenamefont {Bombin}\ \emph {et~al.}(2012)\citenamefont {Bombin}, \citenamefont {Andrist}, \citenamefont {Ohzeki}, \citenamefont {Katzgraber},\ and\ \citenamefont {Martin-Delgado}}]{Bombin2012}%
  \BibitemOpen
  \bibfield  {author} {\bibinfo {author} {\bibfnamefont {H.}~\bibnamefont {Bombin}}, \bibinfo {author} {\bibfnamefont {R.~S.}\ \bibnamefont {Andrist}}, \bibinfo {author} {\bibfnamefont {M.}~\bibnamefont {Ohzeki}}, \bibinfo {author} {\bibfnamefont {H.~G.}\ \bibnamefont {Katzgraber}},\ and\ \bibinfo {author} {\bibfnamefont {M.~A.}\ \bibnamefont {Martin-Delgado}},\ }\bibfield  {title} {\bibinfo {title} {Strong resilience of topological codes to depolarization},\ }\href {https://doi.org/10.1103/PhysRevX.2.021004} {\bibfield  {journal} {\bibinfo  {journal} {Phys. Rev. X}\ }\textbf {\bibinfo {volume} {2}},\ \bibinfo {pages} {021004} (\bibinfo {year} {2012})}\BibitemShut {NoStop}%
\bibitem [{\citenamefont {Chubb}\ and\ \citenamefont {Flammia}(2021)}]{Chubb_2021}%
  \BibitemOpen
  \bibfield  {author} {\bibinfo {author} {\bibfnamefont {C.~T.}\ \bibnamefont {Chubb}}\ and\ \bibinfo {author} {\bibfnamefont {S.~T.}\ \bibnamefont {Flammia}},\ }\bibfield  {title} {\bibinfo {title} {Statistical mechanical models for quantum codes with correlated noise},\ }\href {https://doi.org/10.4171/aihpd/105} {\bibfield  {journal} {\bibinfo  {journal} {Annales de l’Institut Henri Poincaré D, Combinatorics, Physics and their Interactions}\ }\textbf {\bibinfo {volume} {8}},\ \bibinfo {pages} {269–321} (\bibinfo {year} {2021})}\BibitemShut {NoStop}%
\bibitem [{\citenamefont {Schumacher}(1996)}]{PhysRevA.54.2614}%
  \BibitemOpen
  \bibfield  {author} {\bibinfo {author} {\bibfnamefont {B.}~\bibnamefont {Schumacher}},\ }\bibfield  {title} {\bibinfo {title} {Sending entanglement through noisy quantum channels},\ }\href {https://doi.org/10.1103/PhysRevA.54.2614} {\bibfield  {journal} {\bibinfo  {journal} {Phys. Rev. A}\ }\textbf {\bibinfo {volume} {54}},\ \bibinfo {pages} {2614} (\bibinfo {year} {1996})}\BibitemShut {NoStop}%
\bibitem [{\citenamefont {Schumacher}\ and\ \citenamefont {Nielsen}(1996)}]{SchumacherNielsen1996}%
  \BibitemOpen
  \bibfield  {author} {\bibinfo {author} {\bibfnamefont {B.}~\bibnamefont {Schumacher}}\ and\ \bibinfo {author} {\bibfnamefont {M.~A.}\ \bibnamefont {Nielsen}},\ }\bibfield  {title} {\bibinfo {title} {Quantum data processing and error correction},\ }\href {https://doi.org/10.1103/PhysRevA.54.2629} {\bibfield  {journal} {\bibinfo  {journal} {Physical Review A}\ }\textbf {\bibinfo {volume} {54}},\ \bibinfo {pages} {2629} (\bibinfo {year} {1996})}\BibitemShut {NoStop}%
\bibitem [{\citenamefont {Lloyd}(1997)}]{PhysRevA.55.1613}%
  \BibitemOpen
  \bibfield  {author} {\bibinfo {author} {\bibfnamefont {S.}~\bibnamefont {Lloyd}},\ }\bibfield  {title} {\bibinfo {title} {Capacity of the noisy quantum channel},\ }\href {https://doi.org/10.1103/PhysRevA.55.1613} {\bibfield  {journal} {\bibinfo  {journal} {Phys. Rev. A}\ }\textbf {\bibinfo {volume} {55}},\ \bibinfo {pages} {1613} (\bibinfo {year} {1997})}\BibitemShut {NoStop}%
\bibitem [{\citenamefont {Barnum}\ and\ \citenamefont {Knill}(2002)}]{BarnumKnill2002}%
  \BibitemOpen
  \bibfield  {author} {\bibinfo {author} {\bibfnamefont {H.}~\bibnamefont {Barnum}}\ and\ \bibinfo {author} {\bibfnamefont {E.}~\bibnamefont {Knill}},\ }\bibfield  {title} {\bibinfo {title} {Reversing quantum dynamics with near-optimal quantum and classical fidelity},\ }\href {https://doi.org/10.1063/1.1459754} {\bibfield  {journal} {\bibinfo  {journal} {Journal of Mathematical Physics}\ }\textbf {\bibinfo {volume} {43}},\ \bibinfo {pages} {2097} (\bibinfo {year} {2002})},\ \Eprint {https://arxiv.org/abs/quant-ph/0004088} {arXiv:quant-ph/0004088} \BibitemShut {NoStop}%
\bibitem [{\citenamefont {B\'{e}ny}\ and\ \citenamefont {Oreshkov}(2010)}]{BenyOreshkov2010}%
  \BibitemOpen
  \bibfield  {author} {\bibinfo {author} {\bibfnamefont {C.}~\bibnamefont {B\'{e}ny}}\ and\ \bibinfo {author} {\bibfnamefont {O.}~\bibnamefont {Oreshkov}},\ }\bibfield  {title} {\bibinfo {title} {General conditions for approximate quantum error correction and near-optimal recovery channels},\ }\href {https://doi.org/10.1103/PhysRevLett.104.120501} {\bibfield  {journal} {\bibinfo  {journal} {Physical Review Letters}\ }\textbf {\bibinfo {volume} {104}},\ \bibinfo {pages} {120501} (\bibinfo {year} {2010})},\ \Eprint {https://arxiv.org/abs/0907.5391} {arXiv:0907.5391} \BibitemShut {NoStop}%
\bibitem [{\citenamefont {Colmenarez}\ \emph {et~al.}(2026)\citenamefont {Colmenarez}, \citenamefont {M\'{a}rton},\ and\ \citenamefont {M\"uller}}]{ColmenarezMartonMuller2026}%
  \BibitemOpen
  \bibfield  {author} {\bibinfo {author} {\bibfnamefont {L.}~\bibnamefont {Colmenarez}}, \bibinfo {author} {\bibfnamefont {A.}~\bibnamefont {M\'{a}rton}},\ and\ \bibinfo {author} {\bibfnamefont {M.}~\bibnamefont {M\"uller}},\ }\href@noop {} {\bibinfo {title} {Syndrome resampling enhances quantum error correction thresholds}} (\bibinfo {year} {2026}),\ \Eprint {https://arxiv.org/abs/2605.06101} {arXiv:2605.06101 [quant-ph]} \BibitemShut {NoStop}%
\bibitem [{\citenamefont {Smith}\ \emph {et~al.}(2024)\citenamefont {Smith}, \citenamefont {Brown},\ and\ \citenamefont {Bartlett}}]{Smith_2024}%
  \BibitemOpen
  \bibfield  {author} {\bibinfo {author} {\bibfnamefont {S.~C.}\ \bibnamefont {Smith}}, \bibinfo {author} {\bibfnamefont {B.~J.}\ \bibnamefont {Brown}},\ and\ \bibinfo {author} {\bibfnamefont {S.~D.}\ \bibnamefont {Bartlett}},\ }\bibfield  {title} {\bibinfo {title} {Mitigating errors in logical qubits},\ }\bibfield  {journal} {\bibinfo  {journal} {Communications Physics}\ }\textbf {\bibinfo {volume} {7}},\ \href {https://doi.org/10.1038/s42005-024-01883-4} {10.1038/s42005-024-01883-4} (\bibinfo {year} {2024})\BibitemShut {NoStop}%
\bibitem [{\citenamefont {English}\ \emph {et~al.}(2025)\citenamefont {English}, \citenamefont {Williamson},\ and\ \citenamefont {Bartlett}}]{English2025}%
  \BibitemOpen
  \bibfield  {author} {\bibinfo {author} {\bibfnamefont {L.~H.}\ \bibnamefont {English}}, \bibinfo {author} {\bibfnamefont {D.~J.}\ \bibnamefont {Williamson}},\ and\ \bibinfo {author} {\bibfnamefont {S.~D.}\ \bibnamefont {Bartlett}},\ }\bibfield  {title} {\bibinfo {title} {Thresholds for postselected quantum error correction from statistical mechanics},\ }\href {https://doi.org/10.1103/nh49-52y2} {\bibfield  {journal} {\bibinfo  {journal} {Phys. Rev. Lett.}\ }\textbf {\bibinfo {volume} {135}},\ \bibinfo {pages} {120603} (\bibinfo {year} {2025})}\BibitemShut {NoStop}%
\bibitem [{\citenamefont {Ahlswede}\ and\ \citenamefont {Gacs}(1976)}]{Ahlswede1976}%
  \BibitemOpen
  \bibfield  {author} {\bibinfo {author} {\bibfnamefont {R.}~\bibnamefont {Ahlswede}}\ and\ \bibinfo {author} {\bibfnamefont {P.}~\bibnamefont {Gacs}},\ }\bibfield  {title} {\bibinfo {title} {Spreading of sets in product spaces and hypercontraction of the markov operator},\ }\href {https://doi.org/10.1214/aop/1176995937} {\bibfield  {journal} {\bibinfo  {journal} {The Annals of Probability}\ }\textbf {\bibinfo {volume} {4}},\ \bibinfo {pages} {925} (\bibinfo {year} {1976})}\BibitemShut {NoStop}%
\bibitem [{\citenamefont {K{\"o}rner}\ and\ \citenamefont {Marton}(1977)}]{KornerMarton1977Comparison}%
  \BibitemOpen
  \bibfield  {author} {\bibinfo {author} {\bibfnamefont {J.}~\bibnamefont {K{\"o}rner}}\ and\ \bibinfo {author} {\bibfnamefont {K.}~\bibnamefont {Marton}},\ }\bibfield  {title} {\bibinfo {title} {Comparison of two noisy channels},\ }in\ \href@noop {} {\emph {\bibinfo {booktitle} {Topics in Information Theory}}},\ \bibinfo {series} {Colloquia Mathematica Societatis J{\'a}nos Bolyai}, Vol.~\bibinfo {volume} {16},\ \bibinfo {editor} {edited by\ \bibinfo {editor} {\bibfnamefont {I.}~\bibnamefont {Csisz{\'a}r}}\ and\ \bibinfo {editor} {\bibfnamefont {P.}~\bibnamefont {Elias}}}\ (\bibinfo  {publisher} {North-Holland},\ \bibinfo {address} {Amsterdam},\ \bibinfo {year} {1977})\ pp.\ \bibinfo {pages} {411--423},\ \bibinfo {note} {proceedings of the 2nd Colloquium on Information Theory, Keszthely, Hungary, 1975}\BibitemShut {NoStop}%
\bibitem [{\citenamefont {Raginsky}(2016)}]{raginsky2016strongdataprocessinginequalities}%
  \BibitemOpen
  \bibfield  {author} {\bibinfo {author} {\bibfnamefont {M.}~\bibnamefont {Raginsky}},\ }\href {https://arxiv.org/abs/1411.3575} {\bibinfo {title} {Strong data processing inequalities and sobolev inequalities for discrete channels}} (\bibinfo {year} {2016}),\ \Eprint {https://arxiv.org/abs/1411.3575} {arXiv:1411.3575 [cs.IT]} \BibitemShut {NoStop}%
\bibitem [{\citenamefont {Polyanskiy}\ and\ \citenamefont {Wu}(2017)}]{PolyanskiyWu2017}%
  \BibitemOpen
  \bibfield  {author} {\bibinfo {author} {\bibfnamefont {Y.}~\bibnamefont {Polyanskiy}}\ and\ \bibinfo {author} {\bibfnamefont {Y.}~\bibnamefont {Wu}},\ }\bibfield  {title} {\bibinfo {title} {Strong data-processing inequalities for channels and bayesian networks},\ }in\ \href {https://doi.org/10.1007/978-1-4939-7005-6_7} {\emph {\bibinfo {booktitle} {Convexity and Concentration}}},\ \bibinfo {series} {The IMA Volumes in Mathematics and its Applications}, Vol.\ \bibinfo {volume} {161},\ \bibinfo {editor} {edited by\ \bibinfo {editor} {\bibfnamefont {E.}~\bibnamefont {Carlen}}, \bibinfo {editor} {\bibfnamefont {M.}~\bibnamefont {Madiman}},\ and\ \bibinfo {editor} {\bibfnamefont {E.~M.}\ \bibnamefont {Werner}}}\ (\bibinfo  {publisher} {Springer},\ \bibinfo {year} {2017})\ pp.\ \bibinfo {pages} {211--249},\ \Eprint {https://arxiv.org/abs/1508.06025} {arXiv:1508.06025} \BibitemShut {NoStop}%
\bibitem [{\citenamefont {Makur}\ and\ \citenamefont {Polyanskiy}(2018)}]{Makur_2018}%
  \BibitemOpen
  \bibfield  {author} {\bibinfo {author} {\bibfnamefont {A.}~\bibnamefont {Makur}}\ and\ \bibinfo {author} {\bibfnamefont {Y.}~\bibnamefont {Polyanskiy}},\ }\bibfield  {title} {\bibinfo {title} {Comparison of channels: Criteria for domination by a symmetric channel},\ }\href {https://doi.org/10.1109/tit.2018.2839743} {\bibfield  {journal} {\bibinfo  {journal} {IEEE Transactions on Information Theory}\ }\textbf {\bibinfo {volume} {64}},\ \bibinfo {pages} {5704–5725} (\bibinfo {year} {2018})}\BibitemShut {NoStop}%
\bibitem [{\citenamefont {Harris}(1960)}]{Harris1960}%
  \BibitemOpen
  \bibfield  {author} {\bibinfo {author} {\bibfnamefont {T.~E.}\ \bibnamefont {Harris}},\ }\bibfield  {title} {\bibinfo {title} {A lower bound for the critical probability in a certain percolation process},\ }\href {https://doi.org/10.1017/S0305004100034241} {\bibfield  {journal} {\bibinfo  {journal} {Mathematical Proceedings of the Cambridge Philosophical Society}\ }\textbf {\bibinfo {volume} {56}},\ \bibinfo {pages} {13} (\bibinfo {year} {1960})}\BibitemShut {NoStop}%
\bibitem [{\citenamefont {Griffiths}(1967)}]{Griffiths}%
  \BibitemOpen
  \bibfield  {author} {\bibinfo {author} {\bibfnamefont {R.~B.}\ \bibnamefont {Griffiths}},\ }\bibfield  {title} {\bibinfo {title} {Correlations in ising ferromagnets. i},\ }\href {https://doi.org/10.1063/1.1705219} {\bibfield  {journal} {\bibinfo  {journal} {Journal of Mathematical Physics}\ }\textbf {\bibinfo {volume} {8}},\ \bibinfo {pages} {478} (\bibinfo {year} {1967})}\BibitemShut {NoStop}%
\bibitem [{\citenamefont {Kelly}\ and\ \citenamefont {Sherman}(1968)}]{KellySherman1968}%
  \BibitemOpen
  \bibfield  {author} {\bibinfo {author} {\bibfnamefont {D.~G.}\ \bibnamefont {Kelly}}\ and\ \bibinfo {author} {\bibfnamefont {S.}~\bibnamefont {Sherman}},\ }\bibfield  {title} {\bibinfo {title} {General griffiths' inequalities on correlations in ising ferromagnets},\ }\href {https://doi.org/10.1063/1.1664600} {\bibfield  {journal} {\bibinfo  {journal} {Journal of Mathematical Physics}\ }\textbf {\bibinfo {volume} {9}},\ \bibinfo {pages} {466} (\bibinfo {year} {1968})}\BibitemShut {NoStop}%
\bibitem [{\citenamefont {Ginibre}(1970)}]{Ginibre1970}%
  \BibitemOpen
  \bibfield  {author} {\bibinfo {author} {\bibfnamefont {J.}~\bibnamefont {Ginibre}},\ }\bibfield  {title} {\bibinfo {title} {General formulation of griffiths' inequalities},\ }\href {https://doi.org/10.1007/BF01646537} {\bibfield  {journal} {\bibinfo  {journal} {Communications in Mathematical Physics}\ }\textbf {\bibinfo {volume} {16}},\ \bibinfo {pages} {310} (\bibinfo {year} {1970})}\BibitemShut {NoStop}%
\bibitem [{\citenamefont {Calderbank}\ \emph {et~al.}(1998)\citenamefont {Calderbank}, \citenamefont {Rains}, \citenamefont {Shor},\ and\ \citenamefont {Sloane}}]{681315}%
  \BibitemOpen
  \bibfield  {author} {\bibinfo {author} {\bibfnamefont {A.}~\bibnamefont {Calderbank}}, \bibinfo {author} {\bibfnamefont {E.}~\bibnamefont {Rains}}, \bibinfo {author} {\bibfnamefont {P.}~\bibnamefont {Shor}},\ and\ \bibinfo {author} {\bibfnamefont {N.}~\bibnamefont {Sloane}},\ }\bibfield  {title} {\bibinfo {title} {Quantum error correction via codes over gf(4)},\ }\href {https://doi.org/10.1109/18.681315} {\bibfield  {journal} {\bibinfo  {journal} {IEEE Transactions on Information Theory}\ }\textbf {\bibinfo {volume} {44}},\ \bibinfo {pages} {1369} (\bibinfo {year} {1998})}\BibitemShut {NoStop}%
\bibitem [{\citenamefont {Gidney}(2021)}]{Gidney_2021}%
  \BibitemOpen
  \bibfield  {author} {\bibinfo {author} {\bibfnamefont {C.}~\bibnamefont {Gidney}},\ }\bibfield  {title} {\bibinfo {title} {Stim: a fast stabilizer circuit simulator},\ }\href {https://doi.org/10.22331/q-2021-07-06-497} {\bibfield  {journal} {\bibinfo  {journal} {Quantum}\ }\textbf {\bibinfo {volume} {5}},\ \bibinfo {pages} {497} (\bibinfo {year} {2021})}\BibitemShut {NoStop}%
\bibitem [{nba()}]{nbalance}%
  \BibitemOpen
  \href@noop {} {}\bibinfo {note} {For general quantum noise, $n$-balance is equivalent to $\cP_n^\dagger(I_{SQ})=Z_nI_{\mathrm L}$ and need not hold; see Appendix.~B.}\BibitemShut {Stop}%
\bibitem [{tri()}]{trivial-qubit-check}%
  \BibitemOpen
  \href@noop {} {}\bibinfo {note} {For the trivial one-qubit code under depolarizing noise with $p=0.3$, $I_c^{(2)}\approx0.039$, while $I_{c,\mathrm{match}}=I_c^{(1)}\approx-0.247$.}\BibitemShut {Stop}%
\bibitem [{\citenamefont {Vijay}\ \emph {et~al.}(2026)\citenamefont {Vijay}, \citenamefont {Raj}, \citenamefont {Kudler-Flam}, \citenamefont {Vermersch}, \citenamefont {Elben},\ and\ \citenamefont {Nie}}]{Vijay_SAC}%
  \BibitemOpen
  \bibfield  {author} {\bibinfo {author} {\bibfnamefont {A.}~\bibnamefont {Vijay}}, \bibinfo {author} {\bibfnamefont {A.}~\bibnamefont {Raj}}, \bibinfo {author} {\bibfnamefont {J.}~\bibnamefont {Kudler-Flam}}, \bibinfo {author} {\bibfnamefont {B.}~\bibnamefont {Vermersch}}, \bibinfo {author} {\bibfnamefont {A.}~\bibnamefont {Elben}},\ and\ \bibinfo {author} {\bibfnamefont {L.}~\bibnamefont {Nie}},\ }\bibfield  {title} {\bibinfo {title} {Analytically continuing the randomized measurement toolbox},\ }\href {https://doi.org/10.1103/837t-68bf} {\bibfield  {journal} {\bibinfo  {journal} {Phys. Rev. Lett.}\ }\textbf {\bibinfo {volume} {137}},\ \bibinfo {pages} {100202} (\bibinfo {year} {2026})}\BibitemShut {NoStop}%
\bibitem [{\citenamefont {Fletcher}\ \emph {et~al.}(2008)\citenamefont {Fletcher}, \citenamefont {Shor},\ and\ \citenamefont {Win}}]{FletcherShorWin2008}%
  \BibitemOpen
  \bibfield  {author} {\bibinfo {author} {\bibfnamefont {A.~S.}\ \bibnamefont {Fletcher}}, \bibinfo {author} {\bibfnamefont {P.~W.}\ \bibnamefont {Shor}},\ and\ \bibinfo {author} {\bibfnamefont {M.~Z.}\ \bibnamefont {Win}},\ }\bibfield  {title} {\bibinfo {title} {Channel-adapted quantum error correction for the amplitude damping channel},\ }\href {https://doi.org/10.1109/TIT.2008.2006453} {\bibfield  {journal} {\bibinfo  {journal} {IEEE Transactions on Information Theory}\ }\textbf {\bibinfo {volume} {54}},\ \bibinfo {pages} {5705} (\bibinfo {year} {2008})},\ \Eprint {https://arxiv.org/abs/0710.1052} {arXiv:0710.1052} \BibitemShut {NoStop}%
\bibitem [{\citenamefont {Fuchs}\ and\ \citenamefont {van~de Graaf}(1999)}]{FuchsVanDeGraaf1999}%
  \BibitemOpen
  \bibfield  {author} {\bibinfo {author} {\bibfnamefont {C.~A.}\ \bibnamefont {Fuchs}}\ and\ \bibinfo {author} {\bibfnamefont {J.}~\bibnamefont {van~de Graaf}},\ }\bibfield  {title} {\bibinfo {title} {Cryptographic distinguishability measures for quantum-mechanical states},\ }\href@noop {} {\bibfield  {journal} {\bibinfo  {journal} {IEEE Trans. Inf. Theory}\ }\textbf {\bibinfo {volume} {45}},\ \bibinfo {pages} {1216} (\bibinfo {year} {1999})}\BibitemShut {NoStop}%
\bibitem [{\citenamefont {Choi}(1975)}]{Choi1975}%
  \BibitemOpen
  \bibfield  {author} {\bibinfo {author} {\bibfnamefont {M.-D.}\ \bibnamefont {Choi}},\ }\bibfield  {title} {\bibinfo {title} {Completely positive linear maps on complex matrices},\ }\href@noop {} {\bibfield  {journal} {\bibinfo  {journal} {Linear Algebra Appl.}\ }\textbf {\bibinfo {volume} {10}},\ \bibinfo {pages} {285} (\bibinfo {year} {1975})}\BibitemShut {NoStop}%
\bibitem [{\citenamefont {Jamio{\l}kowski}(1972)}]{Jamiolkowski1972}%
  \BibitemOpen
  \bibfield  {author} {\bibinfo {author} {\bibfnamefont {A.}~\bibnamefont {Jamio{\l}kowski}},\ }\bibfield  {title} {\bibinfo {title} {Linear transformations which preserve trace and positive semidefiniteness of operators},\ }\href@noop {} {\bibfield  {journal} {\bibinfo  {journal} {Rep. Math. Phys.}\ }\textbf {\bibinfo {volume} {3}},\ \bibinfo {pages} {275} (\bibinfo {year} {1972})}\BibitemShut {NoStop}%
\bibitem [{\citenamefont {Uhlmann}(1976)}]{Uhlmann1976}%
  \BibitemOpen
  \bibfield  {author} {\bibinfo {author} {\bibfnamefont {A.}~\bibnamefont {Uhlmann}},\ }\bibfield  {title} {\bibinfo {title} {The ``transition probability'' in the state space of a $*$-algebra},\ }\href@noop {} {\bibfield  {journal} {\bibinfo  {journal} {Rep. Math. Phys.}\ }\textbf {\bibinfo {volume} {9}},\ \bibinfo {pages} {273} (\bibinfo {year} {1976})}\BibitemShut {NoStop}%
\end{thebibliography}
\end{document}